\newcommand{\ind}[1]{\mathds{1}\{#1\}}
\newcommand{\E}[2]{\mathbb{E} _{ #1 }  \left[ #2 \right]}
\newcommand{\Var}[2]{\mathrm{Var} _{#1} \left( #2 \right)}
\newcommand{\Prob}[2]{\mathbb{P} _{ #1 } \left\{ #2 \right\}}
\newcommand{\Pb}{\mathbb{P}}
\newcommand{\floor}[1]{\left\lfloor #1 \right\rfloor}
\newcommand{\norm}[1]{\left\lVert #1 \right\rVert}
\newcommand{\abs}[1]{\left| #1 \right|}
\newcommand{\R}{\mathbb{R}}
\newcommand{\gfun}{g}
\newcommand{\T}{\intercal}
\newcommand{\td}{\Tilde}
\newcommand{\KL}[2]{D(#1||#2)}
\newcommand{\emax}[1]{\lambda_{\text{max}} \left( #1 \right)}
\newcommand{\eps}{\varepsilon}
\newcommand{\D}{\mathrm{d}}
\newcommand{\const}{\mathsf{const}}
\newcommand{\Ccal}{ \mathcal{C}_{\alpha} }
\newcommand{\Cbb}{\mathbb{C}}
\newcommand{\FAR}[1]{\mathrm{FAR} \left( #1 \right)}
\newcommand{\SPFA}{\mathrm{SPFA}}
\newcommand{\ADD}{\mathrm{ADD}}
\newcommand{\WADD}[1]{\mathrm{WADD} \left( #1 \right)}
\newcommand{\SADD}[1]{\mathrm{SADD} \left( #1 \right)}
\newcommand{\WADDth}[1]{\mathrm{WADD} _{ \theta } \left( #1 \right)}
\newcommand{\SADDth}[1]{\mathrm{SADD} _{ \theta } \left( #1 \right)}
\DeclareMathOperator*{\esssup}{ess\,sup}
\theoremstyle{plain}
\newtheorem{theorem}{Theorem}[section]
\newtheorem{lemma}{Lemma}[section]
\theoremstyle{definition}
\newtheorem{example}{Example}[section]
\newtheorem{assumption}{Assumption}[section]
\theoremstyle{remark}
\newtheorem*{remark}{Remark}    
\newcommand{\vvv}[1]{#1} 
\newcommand{\ignore}[1]{} 
\begin{document}


\title{Quickest Change Detection with Non-Stationary Post-Change Observations}

\author{Yuchen Liang, ~\IEEEmembership{Graduate Student Member,~IEEE},\thanks{Y. Liang and V.V.~Veeravalli are with the ECE Department and Coordinated Science Laboratory, 
University of Illinois at Urbana-Champaign; email: yliang35,vvv@ILLINOIS.EDU}  Alexander G. Tartakovsky, ~\IEEEmembership{Senior Member,~IEEE},
\thanks{A. G. Tartakovsky is President of AGT StatConsult, Los Angeles, California, USA; e-mail: alexg.tartakovsky@gmail.com} 
and  Venugopal V. Veeravalli, ~\IEEEmembership{Fellow, ~IEEE}\thanks{This work was supported in part by the National Science Foundation under grant ECCS-2033900, and by the Army Research Laboratory under Cooperative Agreement W911NF-17-2-0196, through the University of Illinois at Urbana-Champaign.}
}

\maketitle

\vspace*{-0.5in}

\begin{abstract}
The problem of quickest detection of a change in the distribution of a sequence of independent observations is considered. The pre-change observations are assumed to be stationary with a known distribution, while the post-change observations are allowed to be non-stationary with some possible parametric uncertainty in their distribution. In particular, it is assumed that the cumulative Kullback-Leibler divergence between the post-change and the pre-change distributions grows in a certain manner with time after the change-point. For the case where the post-change distributions are known, a universal asymptotic lower bound on the delay is derived, as the false alarm rate goes to zero. Furthermore, a window-limited Cumulative Sum (CuSum) procedure is developed, and shown to achieve the lower bound asymptotically. For the case where the post-change distributions have parametric uncertainty, a window-limited (WL) generalized likelihood-ratio (GLR) CuSum procedure is developed and is shown to achieve the universal lower bound asymptotically. Extensions to the case with dependent observations are discussed. The analysis is validated through numerical results on synthetic data. The use of the WL-GLR-CuSum procedure in monitoring pandemics is also demonstrated.
\end{abstract}

\begin{IEEEkeywords}
Quickest change detection, non-stationary observations, CuSum procedure, generalized likelihood-ratio CuSum procedure.
\end{IEEEkeywords}

\section{Introduction}




The problem of quickest change detection (QCD) is of fundamental importance in a variety of applications and has been extensively studied in mathematical statistics (see, e.g., \cite{tartakovsky_sequential,tartakovsky_qcd2020,vvv_qcd_overview,xie_vvv_qcd_overview} for overviews). Given a sequence of observations whose distribution changes at some unknown change point, the goal is to detect the change in distribution as quickly as possible after it occurs, while not making too many false alarms. 

In the classical formulations of the QCD problem, it is assumed that observations are independent and identically distributed (i.i.d.) with known pre- and 
post-change distributions. In many practical situations, while it is reasonable to assume that we can accurately estimate the pre-change distribution, the post-change distribution is rarely completely known. Furthermore, in many cases, it is reasonable to assume that the system is in a steady state before the change point and produces i.i.d.\ observations, but in the post-change mode the observations may be substantially non-identically distributed, i.e., non-stationary. 
For example, in the pandemic monitoring problem, the distribution of the number of people infected daily might have achieved a steady (stationary) state before the start of a new wave, but after the onset of a new wave, the post-change observations may no longer be stationary. Indeed, during the early phase of the new wave, the mean of the post-change distribution grows approximately exponentially. We will address the pandemic monitoring problem in detail in Section~\ref{sec:num-res}.

In this paper, our main focus is on the QCD problem with independent observations\footnote{The extension to the case of dependent observations is discussed in Section~\ref{sec:ext}.}, where the pre-change observations are assumed to be stationary with a known distribution, while the post-change observations are allowed to be non-stationary with some possible parametric uncertainty in their distribution.

There have been extensions of the classical formulation to the case where the pre- and/or post-change distributions are not fully known and observations
may be non-i.i.d., i.e., dependent and nonidentically distributed.  For the i.i.d. case with parametric uncertainty in the post-change regime, Lorden~\cite{lorden1971} proposed a generalized likelihood ratio (GLR) Cumulative Sum (CuSum) procedure, and proved its
asymptotic optimality in the minimax sense as the false alarm rate goes to zero, for one-parameter exponential families. An alternative to the GLR-CuSum, the mixture-based CuSum, was proposed and studied by Pollak \cite{pollakmixture} in the same setting as in \cite{lorden1971}.
The GLR approach has been studied in detail for the problem of detecting the change in the mean of a Gaussian i.i.d. sequence with an unknown post-change mean by Siegmund~\cite{siegmund1995}.
\vvv{Both the mixture-based and GLR-CuSum procedures have been studied by Lai~\cite{lai1998} in the pointwise setting in the non-i.i.d. case of possibly dependent and non-identically distributed observations, with parametric uncertainty in the post-change regime. More specifically, in \cite{lai1998}, Lai assumed that the log-likelihood ratio process (between post- and pre-change distributions) normalized by the number of observations $n$ converges to a positive and finite constant as $n \to\infty$, which can be interpreted as a Kullback-Leibler (KL) information number. In the case of independent (but non-identically distributed observations) this means that the expected value of the log-likelihood ratio process grows approximately linearly in the number of observations $n$, for large 
$n$. Tartakovsky~\cite{TartakovskySISP98} and Tartakovsky et al.~\cite{tartakovsky_sequential} refer to such a case as ``asymptotically homogeneous'' (or stationary) case. In \cite{lai1998}, Lai also developed a universal lower bound on the worst-case expected delay as well as on the expected delay to detection for every change point and proved that 
a specially designed window-limited (WL) CuSum procedure asymptotically achieves the lower bound as the maximal probability of false alarm approaches 0, when both pre- and post-change distributions are completely
known, i.e., that the designed WL-CuSum procedure is asymptotically pointwise optimal to first order. For the case where the post-change distribution has parametric uncertainty, Lai proposed and analyzed a WL-GLR-CuSum procedure. A general Bayesian theory for non-i.i.d. asymptotically stationary stochastic models has been developed
by Tartakovsky and Veeravalli~\cite{TartakovskyVeerTVP05} and Tartakovsky~\cite{TartakovskyIEEEIT2017} for the discrete-time scenario, and by Baron and Tartakovsky~\cite{BaronTartakovskySA06} for the continuous-time scenario, when both pre- and post-change models are completely known. It was shown in these works that a Shiryaev-type change detection procedure minimizes not only average detection delay but also higher moments of the detection delay asymptotically, as the weighted probability of false alarm goes to zero, under very general conditions for the prior distribution of the change point. Extensions of these results to the case of the parametric composite post-change hypothesis have been provided by Tartakovsky~\cite{TartakovskyIEEEIT2019,tartakovsky_qcd2020} where it has been shown that mixture
Shiryaev-type detection rule is asymptotically first-order optimal in the Bayesian setup and by Pergamenchtchikov and Tartakovsky~\cite{PerTar-JMVA2019}
where it was shown that the mixture Shiryaev-Roberts-type procedure pointwise and minimax asymptotically optimal in the non-Bayesian setup, but in the asymptotically stationary case where the cumulative KL divergence between post- and pre-change distributions $g(n)$ grows linearly in $n$ as $n\to\infty$. }

\vvv{\emph{Note that all the previously cited works focus on the asymptotically stationary case.}} To the best of our knowledge, the asymptotically non-stationary case where the 
expected value of the log-likelihood ratio process normalized to some nonlinear function $g(n)$ converges to a positive and finite (information) number has never been 
considered.\footnote{It should be noted that such an asymptotically non-stationary case has been previously considered for sequential hypothesis testing problems by Tartakovsky~\cite{TartakovskySISP98} and 
Tartakovsky et al.~\cite{tartakovsky_sequential} .}
Our contributions are as follows:
\begin{enumerate}
  \item We develop a universal asymptotic (as the false alarm rate goes to zero) lower bound on the worst-case expected delay for our problem setting with non-stationary post-change observations.
   
  \item We develop a window limited CuSum (WL-CuSum) procedure that asymptotically achieves the lower bound on the worst-case expected delay when the post-change distribution is fully known.
   
  \item We develop and analyze a WL-GLR-CuSum procedure that asymptotically achieves the worst-case expected delay when the post-change distribution has parametric uncertainty.
   
  \item We validate our analysis through numerical results and demonstrate the use of our approach in monitoring pandemics.
\end{enumerate}

The rest of the paper is structured as follows. In Section~\ref{sec:info-bd}, we derive the information bounds and propose an asymptotically optimal WL-CuSum procedure when the post-change distribution completely known. In Section~\ref{sec:unknown-param}, we propose an asymptotically optimal WL-GLR-CuSum procedure when the post-change distribution has unknown parameters.
In Section~\ref{sec:ext}, we discuss possible extensions to the general non-i.i.d. case where the observations can be dependent and non-stationary. 
In Section~\ref{sec:num-res}, we present some numerical results, including results on monitoring pandemics. We conclude the paper in Section~\ref{sec:concl}. In the Appendix, we provide proofs of certain results.

\section{Information Bounds and Optimal Detection}
\label{sec:info-bd}

Let $\{X_n\}_{n\ge 1}$ be a sequence of independent random variables (generally vectors), and let $\nu$ be a change point. 
Assume that $X_1, \dots, X_{\nu-1}$ all have density $p_0$ with respect to some non-degenerate, sigma-finite measure $\mu$ and 
that $X_\nu, X_{\nu+1}, \dots$ have densities $p_{1,\nu,\nu}, p_{1,\nu+1,\nu}, \ldots$, respectively, with respect to $\mu$. 
Note that the observations are allowed to be non-stationary after the change point and the post-change distributions may generally depend on the change point.

Let $({\cal F}_{n})_{n\ge 0}$ be the filtration, i.e.,  ${\cal F}_{0}=\{\Omega,\varnothing\}$ and ${\cal F}_{n}=\sigma\left\{X_{\ell}, 1\le \ell \le n \right\}$ 
is the sigma-algebra generated by the vector of $n$ observations $X_1,\dots,X_n$ and let ${\cal F}_\infty= \sigma(X_1,X_2, \dots)$. 
In what follows we denote by $\mathbb{P}_\nu$ the probability measure on the entire sequence of observations when the change-point is $\nu$. That is, 
under $\mathbb{P}_\nu$ the random variables $X_1, \dots, X_{\nu-1}$ are i.i.d. with the common (pre-change) density $p_0$ 
and $X_\nu, X_{\nu+1}, \dots$ are independent with (post-change) densities  $p_{1,\nu,\nu}, p_{1,\nu+1,\nu}, \ldots$ .
Let $\mathbb{E}_\nu$ denote the corresponding expectation. For $\nu=\infty$ this distribution will be denoted by $\mathbb{P}_\infty$ and the corresponding expectation by $\mathbb{E}_\infty$. Evidently, under $\mathbb{P}_\infty$ the random variables  $X_1,X_2,\dots$ are i.i.d. with density $p_0$.  In the sequel, we denote by $\tau$ Markov (stopping) times with respect to the filtration $({\cal F}_{n})_{n\ge 0}$, i.e., the event $\{\tau=n\}$ belongs to ${\cal F}_n$. 

The change-time $\nu$ is assumed to be unknown but deterministic. 
The problem is to detect the change quickly while not causing too many false alarms. Let $\tau$ be a stopping time defined on the observation sequence associated with the detection rule, i.e., $\tau$ is the time at which we stop taking observations and declare that the change has occurred. The problem is to detect the change quickly, minimizing the delay to detection $\tau -\nu$, while not causing too many false alarms. 

\subsection{Classical Results under i.i.d. Model}
\label{subsec:stat_postc}

A special case of the model described above is where both the pre- and post-change observations are i.i.d., i.e., $p_{1,n,\nu} \equiv p_1$ for all $n \geq \nu \geq 1$. In this case, Lorden \cite{lorden1971} proposed solving the following optimization problem to find the best stopping time $\tau$:
\begin{equation}
\label{prob_def}
    \inf_{\tau \in \mathcal{C}_\alpha} \WADD{\tau}
\end{equation}
where
\begin{equation}
\label{LordenADD}
    \WADD{\tau} := \sup_{\nu \geq 1} \esssup \E{\nu}{\left(\tau-\nu+1\right)^+|{\cal F}_{\nu-1}}
\end{equation}
characterizes the worst-case expected delay, and $\esssup$ stands for essential supremum. The constraint set is
\begin{equation}
\label{fa_constraint}
    \mathcal{C}_\alpha := \left\{ \tau: \FAR{\tau} \leq \alpha \right\}
\end{equation}
with  
\begin{equation}
    \FAR{\tau} := \frac{1}{ \E{\infty}{\tau}} \label{eq:FAR_def}
\end{equation}   
which guarantees that the false alarm rate of the algorithm does not exceed $\alpha$. Recall that $\E{\infty}{\cdot}$ is the expectation operator when the change never happens, and we use the conventional notation $(\cdot)^+:=\max\{0,\cdot\}$ for the nonnegative part. The mean time to a false alarm (MTFA) 
$\E{\infty}{\tau}$  is sometimes referred to as the average run length to false alarm.

Lorden also showed that Page's CuSum detection algorithm \cite{page1954}
solves the problem in \eqref{prob_def} asymptotically as $\alpha \to 0$,
which is given by:
\begin{equation}
\label{defstoppingrule}
    \tau_{\text{Page}}\left(b\right) := \inf \left\{n: \max_{1\leq k \leq n+1} \sum_{i=k}^n \log \frac{p_1(X_n)}{p_0(X_n)} \geq b \right\}.
\end{equation}
It was shown by Moustakides \cite{moustakides1986} that the CuSum algorithm is exactly optimal for the problem in (\ref{prob_def}) if threshold $b=b_\alpha$ 
is selected so that $\FAR{\tau_{\text{Page}}\left(b_\alpha\right)} = \alpha$. If threshold $b_\alpha$ is selected in a special way that accounts for 
the overshoot of $W(n)$ over $b_\alpha$ at stopping, which guarantees the approximation $\FAR{\tau_{\text{Page}}\left(b_\alpha\right)} \sim \alpha$ 
as $\alpha\to0$, then we have the following third-order asymptotic approximation (as $\alpha\to 0$) for the worst-case expected detection delay of the optimal procedure:
\begin{align*}
 \inf_{\tau \in \mathcal{C}_\alpha} \WADD{\tau} & = \WADD{\tau_{\text{Page}}(b_\alpha)} + o(1),
 \\
\WADD{\tau_{\text{Page}}(b_\alpha)} & = \frac{1}{{\KL{p_1}{p_0}}}
 (\abs{\log \alpha} - \const + o(1))
\end{align*}
(see, e.g., \cite{tartakovsky_sequential}), which also implies the first-order asymptotic approximation (as $\alpha \to 0$):
\begin{equation} \label{FOWADDPage}
    \inf_{\tau \in \mathcal{C}_\alpha} \WADD{\tau} \sim \WADD{\tau_{\text{Page}}\left(\abs{\log\alpha}\right)} = \frac{\abs{\log \alpha}}{\KL{p_1}{p_0}} (1+o(1))
\end{equation} 
where $Y_\alpha\sim G_\alpha$ is equivalent to $Y_\alpha = G_\alpha (1+o(1))$. Here $\KL{p_1}{p_0}$ is the Kullback-Leibler (KL) divergence between $p_1$ and $p_0$. Also, in the following we use a standard notation
$o(x)$ as $x\to x_0$ for the function $f(x)$ such that $f(x)/x \to 0$ as $x\to x_0$, i.e., $o(1) \to 0$ as $\alpha \to 0$, and $O(x)$ for the function $f(x)$ such that
$f(x)/x$ is bounded as $x \to 0$, i.e., $O(1)$ is a finite constant.

Along with Lorden's worst average detection delay $\WADD{\tau}$, defined in \eqref{LordenADD},
we can also consider the less pessimistic Pollak's performance measure \cite{PollakAS85}:
\[
    \SADD{\tau} := \sup_{\nu \geq 1}  \E{\nu}{\tau-\nu+1| \tau \geq \nu}.
\]
Pollak suggested the following minimax optimization problem in class $\mathcal{C}_\alpha$:
\begin{equation}\label{Pollak_problem}
\inf_{\tau\in \mathcal{C}_\alpha}  \SADD{\tau}.
\end{equation}

An alternative to CuSum is the Shiryaev-Roberts (SR) change detection procedure $\tau_{\text{SR}}$ based not on the maximization of the likelihood ratio over the unknown change point but on summation of likelihood ratios (i.e., on averaging over the uniform prior distribution).   As shown in \cite{tartakovsky_polpolunch_2012}, the SR procedure  is second-order asymptotically minimax with respect to Pollak's measure:
\[
\inf_{\tau\in \mathcal{C}_\alpha}  \SADD{\tau} = \SADD{\tau_{\text{SR}}} + O(1) \quad \text{as}~ \alpha\to 0.
\]
The CuSum procedure with a certain threshold $b_\alpha$ also has a second-order optimality property with respect to the risk $\SADD{\tau}$. A detailed numerical comparison of CuSum and SR procedures for i.i.d.\ models was performed in \cite{MoustPolTarCS09}.

\subsection{Information Bounds for Non-stationary Post-Change Observations}
\label{subsec:ext_ib}

In the case where both the pre- and post-change observations are independent and the post-change observations are non-stationary,  the log-likelihood ratio is:
\begin{equation}
\label{llr:def}
    Z_{n,k} = \log \frac{p_{1,n,k}(X_n)}{p_0(X_n)}
\end{equation}
where $n \geq k \geq 1$. Here $k$ is a hypothesized change-point and $X_n$ is drawn from the true distribution $\mathbb{P}_\nu$ ($\nu \in [1,\infty)$ or $\nu =\infty$).

In the classical i.i.d. model described in Section~\ref{subsec:stat_postc}, the cumulative KL-divergence after the change point increases linearly in the number of observations. We generalize this condition as follows. 
Let $\gfun_\nu: \R^+ \to \R^+$ be an increasing and continuous function, which we will refer to as \emph{growth function}. Note that the inverse of $\gfun_\nu$, denoted by $\gfun_\nu^{-1}$, exists and is also increasing and continuous. We assume that the expected sum of the log-likelihood ratios under $\mathbb{P}_\nu$, which corresponds to the cumulative KL-divergence for our non-stationary model,  matches the value of the growth function at all positive integers, i.e.,
\begin{equation}
\label{llr:growth}
    \gfun_\nu(n) = \sum_{i=\nu}^{\nu+n-1} \E{\nu}{Z_{i,\nu}},\forall n \geq 1
\end{equation}
Furthermore, we assume that $\E{\nu}{Z_{i,\nu}} > 0$ for all $i \geq \nu$
and that for each $x >0$
\begin{equation}
    \gfun^{-1}(x) := \sup_{\nu \geq 1} \gfun_\nu^{-1}(x)
\end{equation}
exists. Note that $\gfun^{-1}$ is also increasing and continuous. \vvv{We also assume that $\gfun_\nu(n)$ diverges for all $\nu \geq 1$, and thus 
$\gfun^{-1}(x)$ is properly defined on the entire positive real line.} In the special case where the post-change distribution is invariant to the change-point $\nu$, i.e., for $j\geq 0$, $p_{1,\nu+j,\nu}$ is not a function of $\nu$, we have $g \equiv g_\nu$ and $g^{-1} \equiv g_\nu^{-1}$ for all $\nu \geq 1$.

 \vvv{In order for change detection procedures to perform well it is necessary for the cumulative KL divergence between post- and pre-change distributions to grow sufficiently fast with $n$, e.g., faster than $\log n$. That is, the inverse $\gfun^{-1}(x)$ cannot grow too fast. This fact was 
discussed in \cite{TartakovskySISP98} for the hypothesis testing problem where it was shown that if $g(n)\sim \log n$, then the performance of Wald's SPRT is extremely poor. The same is true for change detection problems.} 

\vvv{The following key assumption on $\gfun^{-1}(x)$ guarantees the asymptotic optimality solution to the minimax problem in which we are interested:}
\begin{equation}
\label{llr:grow_cond}
    \log \gfun^{-1}(x) = o(x) \quad \text{as}~ x \to \infty.
\end{equation}

\vvv{To better understand this condition, we provide three special cases below:   
\begin{enumerate}
    \item The post-change observations are independent and stationary (as in the classical case). Here $\gfun^{-1}(x)$ is linear in $x$. Thus, $\log \gfun^{-1}(x) = O(\log x) = o(x)$, and condition~\eqref{llr:grow_cond} is always satisfied.
    \item The KL divergence  between the post- and pre-change distributions always increases (or increases asymptotically). Intuitively, this means that the post-change distributions increasingly drift away from that of the pre-change over time, as in Example~\ref{gex} below and in the pandemic monitoring example in Section~\ref{num-res:covid}. In this case, $\gfun^{-1}(x)$ grows at a slower than linear rate, and thus condition~\eqref{llr:grow_cond} is always satisfied.
    \item The KL divergence between the post- and pre-change distributions gradually decreases to 0. Intuitively, this means that the post-change distributions gradually recovers that of the pre-change over time. Condition~\eqref{llr:grow_cond} guarantees detection for slow enough recovery, specifically, when the post-change KL divergence satisfies $\E{\nu}{Z_{n,\nu}} \sim n^{-\theta}$ with $\theta < 1$ as the decay factor. In this case, $\log \gfun^{-1}(x) = O((1-\theta)^{-1} \log x) = o(x)$. Obviously, condition~\eqref{llr:grow_cond} fails if $\E{\nu}{Z_{n,\nu}} \sim n^{-1}$, i.e., when $g(n)\sim \log n$ and hence $\log \gfun^{-1}(x)\sim x$. We provide an example and some simulations for diminishing KL divergence between the post- and pre-change distributions in Section~\ref{num-res:pa_dim}.
\end{enumerate}
}

We should note that such a growth function $\gfun (n)$ has been adopted previously in sequential hypothesis testing with non-stationary observations \cite[Sec.~3.4]{tartakovsky_sequential}, \vvv{but not in QCD problem formulations such as the one considered here.}


The proof of asymptotic optimality is performed in two steps. First, we derive a first-order asymptotic (as $\alpha \to 0$) lower bound for the maximal expected detection delays $\inf_{\tau \in \Ccal}\WADD{\tau}$ and $\inf_{\tau \in \Ccal}\SADD{\tau}$. To this end, we need the following
right-tail condition for the log-likelihood ratio process:
\begin{equation}
\label{llr:upper}
    \sup_{\nu \geq 1} \Prob{\nu}{\max_{t \leq n} \sum_{i=\nu}^{\nu+t-1} Z_{i,\nu} \geq (1+\delta) \gfun_\nu(n)} \xrightarrow{n \to \infty} 0 \quad \forall \delta >0,
\end{equation}
assuming that for all $\nu \ge 1$
\[
\frac{\sum_{i=\nu}^{\nu+t-1} Z_{i,\nu}}{\gfun_\nu(t)} \xrightarrow[t\to\infty]{\text{in} ~ \mathbb{P}_\nu\text{-probability}} 1.
\]
At the second stage, we show that this lower bound is attained for the WL-CuSum procedure under the following left-tail condition
\begin{equation}
\label{llr:lower}
    \max_{t \geq \nu \geq 1} \Prob{\nu}{\sum_{i=t}^{t+n-1} Z_{i,t} \leq (1-\delta) \gfun_\nu(n) } \xrightarrow{n \to \infty} 0 \quad \forall \delta \in (0, 1).
\end{equation}

The following lemma provides sufficient conditions under which conditions \eqref{llr:upper} and \eqref{llr:lower} 
hold for the sequence of independent and non-stationary observations. Hereafter we use the notation $\Var{\nu}{Y}= \E{\nu}{Y^2 - \E{\nu}{Y}^2}$
for variance of the random variable $Y$ under distribution $\mathbb{P}_\nu$.

\begin{lemma}
\label{llr:lemma}
Consider the growth function $\gfun_\nu(n)$ defined in \eqref{llr:growth}. Suppose that the sum of variances of the log-likelihood ratios satisfies
\begin{equation}
\label{llr:var}
    \sup_{t \geq \nu \geq 1} \frac{1}{\gfun_\nu^2(n)} \sum_{i=t}^{t+n-1} \Var{\nu}{Z_{i,t}} \xrightarrow{n \to \infty} 0
\end{equation}
Then condition \eqref{llr:upper} holds. 

If, in addition, for all $\nu \geq 1$ and all positive integers $\Delta$,
\begin{equation}
\label{llr:tshift}
    \E{\nu}{Z_{i,\nu}} \leq \E{\nu}{Z_{i+\Delta,\nu+\Delta}},
\end{equation}
then condition \eqref{llr:lower} holds.
\end{lemma}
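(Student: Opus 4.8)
The plan is to recenter the relevant sums of log-likelihood ratios around their means and apply the classical second-moment inequalities (Kolmogorov's maximal inequality and Chebyshev), exploiting monotonicity of $\gfun_\nu$ for \eqref{llr:upper} and the time-shift monotonicity \eqref{llr:tshift} for \eqref{llr:lower}. For the first claim, fix $\nu$ and set $S_t := \sum_{i=\nu}^{\nu+t-1} Z_{i,\nu}$, so that $\E{\nu}{S_t} = \gfun_\nu(t)$ by \eqref{llr:growth}; then $M_t := S_t - \gfun_\nu(t)$ is a sum of independent zero-mean random variables under $\mathbb{P}_\nu$, hence a martingale with $\E{\nu}{M_n^2} = \sum_{i=\nu}^{\nu+n-1} \Var{\nu}{Z_{i,\nu}}$ (finite for all large $n$ by \eqref{llr:var}). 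Since $\gfun_\nu$ is increasing, on the event $\{\max_{t\le n} S_t \ge (1+\delta)\gfun_\nu(n)\}$ some $t\le n$ satisfies $M_t = S_t - \gfun_\nu(t) \ge (1+\delta)\gfun_\nu(n) - \gfun_\nu(n) = \delta\,\gfun_\nu(n)$, so that event is contained in $\{\max_{t\le n} M_t \ge \delta\,\gfun_\nu(n)\}$. Kolmogorov's maximal inequality therefore gives
\[
\sup_{\nu\ge 1}\Prob{\nu}{\max_{t\le n} S_t \ge (1+\delta)\gfun_\nu(n)} \;\le\; \frac{1}{\delta^2}\,\sup_{\nu\ge 1}\frac{1}{\gfun_\nu^2(n)}\sum_{i=\nu}^{\nu+n-1}\Var{\nu}{Z_{i,\nu}} \;\xrightarrow{n\to\infty}\; 0
\]
by \eqref{llr:var} specialized to $t=\nu$; applying Chebyshev's inequality to $\Prob{\nu}{|M_t|\ge\eps\,\gfun_\nu(t)}$ for fixed $\nu$ and $t\to\infty$ also yields the normalized convergence $S_t/\gfun_\nu(t)\to 1$ in $\mathbb{P}_\nu$-probability required in \eqref{llr:upper}.

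For the second claim, fix $t\ge\nu\ge 1$, put $\Delta:=t-\nu\ge 0$ and $\td S:=\sum_{i=t}^{t+n-1} Z_{i,t}$. Re-indexing $i=j+\Delta$ and invoking \eqref{llr:tshift} term by term (the case $\Delta=0$ being trivial) gives
\[
\mu_{t,\nu}\;:=\;\E{\nu}{\td S}\;=\;\sum_{j=\nu}^{\nu+n-1}\E{\nu}{Z_{j+\Delta,\nu+\Delta}}\;\ge\;\sum_{j=\nu}^{\nu+n-1}\E{\nu}{Z_{j,\nu}}\;=\;\gfun_\nu(n)\;>\;0 .
\]
Since $\delta\in(0,1)$, we have $(1-\delta)\gfun_\nu(n)\le(1-\delta)\mu_{t,\nu}$, hence $\{\td S\le(1-\delta)\gfun_\nu(n)\}\subseteq\{\mu_{t,\nu}-\td S\ge\delta\mu_{t,\nu}\}$, and Chebyshev's inequality together with $\Var{\nu}{\td S}=\sum_{i=t}^{t+n-1}\Var{\nu}{Z_{i,t}}$ and $\mu_{t,\nu}\ge\gfun_\nu(n)$ yields
\[
\max_{t\ge\nu\ge 1}\Prob{\nu}{\td S\le(1-\delta)\gfun_\nu(n)}\;\le\;\frac{1}{\delta^2}\,\sup_{t\ge\nu\ge 1}\frac{1}{\gfun_\nu^2(n)}\sum_{i=t}^{t+n-1}\Var{\nu}{Z_{i,t}}\;\xrightarrow{n\to\infty}\;0
\]
by \eqref{llr:var}, which is exactly \eqref{llr:lower}.

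The argument is essentially a quantitative law-of-large-numbers repackaging of \eqref{llr:var} and \eqref{llr:tshift}, so I do not anticipate a genuine obstacle. The only points requiring care are the bookkeeping of the two change-point indices — the hypothesized $t$ in $Z_{i,t}$ versus the true $\nu$ governing the law $\mathbb{P}_\nu$ — and, for \eqref{llr:upper}, the observation that monotonicity of $\gfun_\nu$ is precisely what converts the one-sided excursion bound on $S_t$ into a maximal inequality for the centered martingale $M_t$; one should also note in passing that \eqref{llr:var} forces the partial-sum variances to be finite for all large $n$, so that the moment inequalities indeed apply.
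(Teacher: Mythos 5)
Your proof is correct and follows essentially the same route as the paper: Kolmogorov's maximal inequality applied to the centered partial sums (using monotonicity of $\gfun_\nu$ to shift the threshold) for condition \eqref{llr:upper}, and Chebyshev's inequality combined with the term-by-term comparison from \eqref{llr:tshift} (with $\Delta = t-\nu$) for condition \eqref{llr:lower}. The only cosmetic difference is that you recenter at the true mean $\mu_{t,\nu}$ and compare thresholds, whereas the paper absorbs the nonnegative mean difference directly into the deviation event; these are equivalent.
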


The proof is given in the appendix.

\begin{remark}
One can generalize condition \eqref{llr:tshift} in a way that either $\E{\nu}{Z_{i,\nu}} \leq \E{\nu}{Z_{i+\Delta,\nu+\Delta}}$ or
\begin{equation*}
    \frac{1}{\gfun_\nu(n)} \sum_{i=\nu}^{\nu+n-1} \left( \E{\nu}{Z_{i,\nu}} - \E{\nu}{Z_{i+\Delta,\nu+\Delta}}\right) = o(1)
\end{equation*}
holds for all positive integers $\Delta$.
\end{remark}

\begin{example}
\label{gex}
Consider the following Gaussian exponential mean-change (GEM) detection problem. Denote by ${\cal N}(\mu_0,\sigma_0^2)$ the Gaussian distribution with mean $\mu_0$ and variance $\sigma_0^2$. Let $X_1,\dots,X_{\nu-1}$ be distributed as ${\cal N}(\mu_0,\sigma_0^2)$, and for all $n \geq \nu$ let $X_n$ be distributed as ${\cal N}(\mu_0 e^{\theta(n-\nu)},\sigma_0^2)$. Here $\theta$ is some positive fixed parameter. The log-likelihood ratio is given by:
\begin{align}
\label{gex:llr}
    Z_{n,t} = \log \frac{p_{1,n,t}(X_n)}{p_0(X_n)} &= -\frac{(X_n-\mu_0 e^{\theta(n-t)})^2}{2 \sigma_0^2} + \frac{(X_n-\mu_0)^2}{2 \sigma_0^2} \nonumber\\
    &= \frac{\mu_0}{\sigma_0^2} (e^{\theta(n-t)} - 1) X_n - \frac{\mu_0^2 (e^{2 \theta (n-t)}-1)}{2 \sigma_0^2}.
\end{align}
Now, the growth function can be calculated as
\begin{equation}
\label{gex:growth}
    \gfun_\nu(n) = \sum_{i=\nu}^{\nu+n-1} \E{\nu}{Z_{i,\nu}} = \sum_{i=0}^{n-1} \frac{\mu_0^2}{2 \sigma_0^2} (e^{\theta i}-1)^2.
\end{equation}
Since the post-change distribution is invariant to the change-point $\nu$, $\gfun^{-1}(n) = \gfun_1^{-1}(n) = O(\log n) \implies \log \gfun^{-1}(n) = o(n)$, which satisfies \eqref{llr:grow_cond}. Also, the sum of variances of the log-likelihood ratios is
\begin{equation*}
    \sum_{i=t}^{t+n-1} \Var{\nu}{Z_{i,t}} = \sum_{i=t}^{t+n-1}  \frac{\mu_0^2}{\sigma_0^4} (e^{\theta(i-t)} - 1)^2 \Var{\nu}{X_i} = \frac{2}{\sigma_0^2} \gfun_\nu(n) = o(\gfun_\nu^2(n))
\end{equation*}
for all $t \geq \nu$, which establishes condition \eqref{llr:var}.
Further, for any $i \geq \nu$ and $\Delta \geq 1$,
\begin{align*}
    \E{\nu}{Z_{i+\Delta,\nu+\Delta}} &=  \frac{\mu_0}{\sigma_0^2} (e^{\theta(i-\nu)} - 1) \E{\nu}{X_{i+\Delta}} - \frac{\mu_0^2 (e^{2 \theta (i-\nu)}-1)}{2 \sigma_0^2}\\
    &\geq \frac{\mu_0}{\sigma_0^2} (e^{\theta(i-\nu)} - 1) \E{\nu}{X_{i}} - \frac{\mu_0^2 (e^{2 \theta (i-\nu)}-1)}{2 \sigma_0^2} = \E{\nu}{Z_{i,\nu}}
\end{align*}
which establishes condition \eqref{llr:tshift}.
\end{example}

The following theorem gives a lower bound on the worst-case average detection delays as $\alpha \to 0$ in class $\Ccal$.
\begin{theorem}
\label{llr:delay_lower_bound}
For $\delta\in(0,1)$ let 
\begin{equation}
\label{eq:thm1_h}
    h_\delta (\alpha) := \gfun^{-1} ((1-\delta) |\log\alpha|).
\end{equation} 
Suppose that $\gfun^{-1}(x)$ satisfies \eqref{llr:grow_cond}. Then for all $\delta\in(0,1)$ and some 
$\nu \ge 1$
\begin{equation}\label{supProb0}
\lim_{\alpha \to 0} \sup_{\tau\in\Ccal}  \Prob{\nu}{\nu \leq \tau < \nu + h_\delta (\alpha) } = 0
\end{equation}
and as $\alpha \to 0$,
\begin{equation} \label{LowerSADD} 
    \inf_{\tau \in \Ccal} \WADD{\tau} \geq \inf_{\tau \in \Ccal} \SADD{\tau} \geq \gfun^{-1}(\abs{\log{\alpha}}) (1+o(1)).
\end{equation}
\end{theorem}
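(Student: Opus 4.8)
The plan is to prove the tail estimate \eqref{supProb0} first and then read off the delay bound \eqref{LowerSADD} from it. The inequality $\SADD{\tau}\le\WADD{\tau}$ is standard and follows from the $\esssup$ over $\mathcal F_{\nu-1}$ in \eqref{LordenADD}: since $\{\tau\ge\nu\}\in\mathcal F_{\nu-1}$ and on that event $(\tau-\nu+1)^+=\tau-\nu+1$, the $\mathbb P_\nu(\cdot\mid\tau\ge\nu)$-average of $\E{\nu}{(\tau-\nu+1)^+\mid\mathcal F_{\nu-1}}$ is at most its essential supremum. For the rest: given that \eqref{supProb0} holds for a $\nu$ with $\Prob{\nu}{\tau\ge\nu}=\Prob{\infty}{\tau\ge\nu}$ bounded away from $0$ as $\alpha\to0$ (the selection below produces such a $\nu$), Markov's inequality gives, up to integer rounding,
\[
  \E{\nu}{\tau-\nu+1\mid\tau\ge\nu}\;\ge\;h_\delta(\alpha)\,\Prob{\nu}{\tau\ge\nu+h_\delta(\alpha)\mid\tau\ge\nu}\;\ge\;h_\delta(\alpha)\Big(1-\tfrac{\Prob{\nu}{\nu\le\tau<\nu+h_\delta(\alpha)}}{\Prob{\nu}{\tau\ge\nu}}\Big),
\]
which by \eqref{supProb0} is $h_\delta(\alpha)(1+o(1))$ uniformly over $\tau\in\Ccal$; hence $\inf_{\tau\in\Ccal}\SADD{\tau}\ge\gfun^{-1}((1-\delta)|\log\alpha|)(1+o(1))$ for every $\delta\in(0,1)$, and letting $\delta\downarrow0$ yields \eqref{LowerSADD} (this last step uses only $\gfun^{-1}((1-\delta)x)/\gfun^{-1}(x)\to1$ as $\delta\downarrow0$, which holds for the growth functions of interest here).

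For \eqref{supProb0}, fix $\delta\in(0,1)$, a small $\eps>0$ with $\delta-\eps(2-\delta)>0$, and put $m=m_\alpha=\floor{h_\delta(\alpha)}$ and $C:=(1+\eps)(1-\delta)|\log\alpha|$. The engine is the change-of-measure (Wald likelihood-ratio) identity: with $\Lambda_k(n):=\sum_{i=k}^{n}Z_{i,k}$ for $n\ge k$, the variable $e^{\Lambda_k(n)}$ is the density of $\mathbb P_k$ with respect to $\mathbb P_\infty$ on $\mathcal F_n$, so for any stopping time $\tau$,
\[
  \Prob{k}{k\le\tau<k+m}=\E{\infty}{\ind{k\le\tau<k+m}\,e^{\Lambda_k(\tau)}}\;\le\;e^{C}\,\Prob{\infty}{k\le\tau<k+m}\;+\;\Prob{k}{\max_{1\le t\le m}\,\sum_{i=k}^{k+t-1}Z_{i,k}\ge C},
\]
obtained by splitting the expectation according to whether the running maximum of the post-change partial sums over the window reaches the level $C$.

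The second term is killed by the right-tail condition \eqref{llr:upper}: choosing $\nu$ to (nearly) attain the supremum in $\gfun^{-1}=\sup_k\gfun_k^{-1}$ at the argument $(1-\delta)|\log\alpha|$ yields $\gfun_\nu(m)\le(1-\delta)|\log\alpha|$, so $C\ge(1+\eps)\gfun_\nu(m)$ and the term is at most $\sup_{k\ge1}\Prob{k}{\max_{t\le m}\sum_{i=k}^{k+t-1}Z_{i,k}\ge(1+\eps)\gfun_k(m)}\to0$ since $m\to\infty$ as $\alpha\to0$. For the first term we invoke the false-alarm constraint $\E{\infty}{\tau}\ge1/\alpha$: the events $\{\tau\in[(j-1)m+1,jm]\}$, $j\ge1$, are pairwise disjoint, so $\sum_j\Prob{\infty}{\tau\in[(j-1)m+1,jm]}\le1$, and among the first $N:=\ceil{\alpha^{-(1-\eps)}}$ of these blocks at least one, say $[\nu,\nu+m)$, has $\mathbb P_\infty$-mass at most $1/N\le\alpha^{1-\eps}$; taking the \emph{first} such light block also keeps $\Prob{\infty}{\tau<\nu}$, hence $\Prob{\nu}{\tau\ge\nu}$, bounded away from $0$. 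Then $e^{C}\Prob{\infty}{\nu\le\tau<\nu+m}\le\alpha^{\,1-\eps-(1+\eps)(1-\delta)}=\alpha^{\,\delta-\eps(2-\delta)}\to0$, and substituting both bounds into the displayed inequality proves \eqref{supProb0} for this $\nu$ (the same one used above to deduce \eqref{LowerSADD}).

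The main obstacle is exactly this selection of $\nu$. In the non-invariant case $\nu$ must do double duty: it has to be near-optimal in $\gfun^{-1}=\sup_k\gfun_k^{-1}$ so that the right-tail term vanishes, and it must simultaneously start a light block with $\Prob{\nu}{\tau\ge\nu}$ non-negligible so that the false-alarm term and the Markov step go through; reconciling these — e.g.\ by restricting the block search to blocks whose left endpoint is near-optimal — is the delicate point and is why the statement asserts only the existence of \emph{some} suitable $\nu$ (in the change-point-invariant case $\gfun_\nu\equiv\gfun$ and the tension disappears). A second nuisance is that the block search as phrased can break for pathological rules that pile their $\mathbb P_\infty$-stopping mass onto an initial segment while meeting the false-alarm constraint through a far-out atom; for such rules a change-point placed just past that segment already makes $\SADD{\tau}$ and $\WADD{\tau}$ dwarf the claimed bound, so they may be discarded, which is cleanest to organize by running the whole argument as a proof by contradiction (assuming $\SADD{\tau_\alpha}\le(1-\eta)\gfun^{-1}(|\log\alpha|)$ along $\alpha\to0$, so that $\SADD{\tau_\alpha}$ is a priori controlled and the pathologies cannot occur).
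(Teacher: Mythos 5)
Your route is essentially the paper's: change of measure on the event $\{\nu\le\tau<\nu+h_\delta(\alpha)\}$ with the log-likelihood ratio truncated at a level $C$ of order $(1-\delta^2)\abs{\log\alpha}$, the right-tail condition \eqref{llr:upper} to dispose of the overshoot term, the false-alarm constraint to locate a stretch of length $m\approx h_\delta(\alpha)$ on which $\tau$ is unlikely to stop under $\mathbb{P}_\infty$, then Markov's inequality and $\delta\downarrow 0$ (your regularity remark $\gfun^{-1}((1-\delta)x)/\gfun^{-1}(x)\to1$ is indeed what that last interchange needs; the paper appeals to continuity for the same purpose). The genuine gap is in your block-selection step, where you replace the paper's citation of Lemma 2.1 of \cite{tartakovsky_qcd2020} by a pigeonhole over disjoint blocks of \emph{unconditional} $\mathbb{P}_\infty$-mass. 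Your claim that taking the \emph{first} light block keeps $\Prob{\nu}{\tau\ge\nu}=\Prob{\infty}{\tau\ge\nu}$ bounded away from zero is false: take the rule that stops at time $1$ iff $X_1\in A$ with $\Pb_0(A)=1-\sqrt{\alpha}$ and otherwise stops at the deterministic time $\lceil\alpha^{-3/2}\rceil$; it satisfies $\E{\infty}{\tau}\ge 1/\alpha$, its first light block begins at $\nu=m+1$, and there $\Prob{\infty}{\tau\ge\nu}=\sqrt{\alpha}\to0$, so the denominator in your conditional Markov step is not controlled and the argument collapses. Your proposed repair --- discard such rules because ``a change point just past the heavy segment already makes $\SADD{\tau}$ dwarf the bound'' --- is exactly an assertion that requires proof; your contradiction hypothesis bounds conditional post-change delays and says nothing about the $\mathbb{P}_\infty$-distribution of $\tau$, so the pathology is not excluded by it. The clean fix is precisely what the paper invokes: for any $M<\alpha^{-1}$ there exists $\ell$ with $\Prob{\infty}{\tau<\ell+M\,|\,\tau\ge\ell}<M\alpha$ (otherwise $\tau$ would be blockwise dominated by $M$ times a geometric variable with success probability $M\alpha$, forcing $\E{\infty}{\tau}\le 1/\alpha$), and then the entire argument --- the change of measure (legitimate because $\{\tau\ge\ell\}\in{\cal F}_{\ell-1}$), the bound on $\beta^{(\nu)}_{\delta,\alpha}$, and the Markov step for $\SADD{\tau}$ --- is run conditionally on $\{\tau\ge\ell\}$, so the troublesome denominator never appears; note also that this conditional route is where condition \eqref{llr:grow_cond} enters, via $\log m=o(\abs{\log\alpha})$.

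On your other flagged point, the ``double duty'' of $\nu$ in the non-invariant case (near-attaining $\sup_\nu\gfun_\nu^{-1}$ so that the threshold $C$ sits above $(1+\eps)\gfun_\nu(m)$ and \eqref{llr:upper} applies, while simultaneously being the index produced by the false-alarm lemma): you are right that this is delicate, and the paper resolves it only implicitly through the ``some $\nu$'' phrasing --- the argument is airtight as written in the change-point-invariant case $\gfun_\nu\equiv\gfun$, which covers the paper's examples. Flagging this honestly is fine and does not count against you, but flagging it is not resolving it, and it cannot compensate for the missing conditional-probability selection above, which is the step your write-up actually gets wrong.
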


\begin{proof}
Obviously, for any Markov time $\tau$,
\[
    \WADD{\tau} \geq \SADD{\tau} \ge  \E{\nu}{(\tau -\nu)^+} .
\]
Therefore, to prove the asymptotic lower bound \eqref{LowerSADD} we have to show that as $\alpha \to 0$,
\begin{equation}
\label{eq:thm1_main}
    \sup_{\nu \ge 1} \E{\nu}{(\tau -\nu)^+} \geq \gfun^{-1} (|\log{\alpha}|) (1+o(1)),
\end{equation}
where the $o(1)$ term on the right-hand side does not depend on $\tau$, i.e., uniform in $\tau \in \Ccal$.

To begin, let the stopping time $\tau \in  {\cal C}_\alpha$ and note that by Markov's inequality,  
\[
 \E{\nu}{(\tau -\nu)^+} \ge h_\delta (\alpha)  \Prob{\nu}{(\tau -\nu)^+ \ge h_\delta (\alpha)}.
\]
Hence, if assertion \eqref{supProb0} holds, then for some $\nu \ge 1$
\[
\inf_{\tau \in \Ccal} \Prob{\nu}{(\tau -\nu)^+ \ge h_\delta (\alpha)} = 1 -o(1) \quad \text{as}~ \alpha\to0.
\]
 This implies the asymptotic inequality
\begin{equation} \label{Exptauplus}
\inf_{\tau\in \Ccal} \E{\nu}{(\tau -\nu)^+}  \ge h_\delta (\alpha) (1+o(1)),
\end{equation}
which holds for an arbitrary $ \delta\in (0,1)$ and some $\nu$. Since by our assumption the function $ h_\delta (\alpha) $ is continuous, 
taking the limit $\delta \to 0$ and maximizing over $\nu\ge 1$ yields inequality \eqref{eq:thm1_main}.

It remains to prove \eqref{supProb0}.  
Changing the measure $\mathbb{P}_\infty \to \mathbb{P}_\nu$ and using Wald's likelihood ratio identity, we obtain the following chain of equalities and inequalities
for any $C >0$ and $\delta\in (0,1)$:
\begin{align*}
& \Prob{\infty}{\nu \leq \tau < \nu + h_\delta (\alpha)} = \E{\nu}{\ind{0 \le \tau -\nu < h_\delta(\alpha)} \exp\left(- \sum_{i=\nu}^\tau Z_{i,\nu} \right)}
\\
& \ge \E{\nu}{\ind{0 \le \tau -\nu < h_\delta(\alpha), \sum_{i=\nu}^\tau Z_{i,\nu} < C} \exp\left(- \sum_{i=\nu}^\tau Z_{i,\nu} \right) }
\\
&\ge e^{-C} \Prob{\nu}{0 \le \tau -\nu < h_\delta(\alpha), \max_{0 \le n-\nu <  h_\delta(\alpha) }\sum_{i=\nu}^n Z_{i,\nu} < C}
\\
&\ge e^{-C} \left(\Prob{\nu}{0 \le \tau -\nu < h_\delta(\alpha)} - \Prob{\nu}{\max_{0 \le n <  h_\delta(\alpha) }\sum_{i=\nu}^{\nu+n} Z_{i,\nu} \ge C}\right) ,
\end{align*} 
where the last inequality follows from the fact that $\Pr({\cal A} \cap {\cal B}) = \Pr({\cal A}) - \Pr({\cal B}^c)$ for any events ${\cal A}$ and ${\cal B}$, where 
${\cal B}^c$ is the complement event of ${\cal B}$. Setting $C=\gfun(h_\delta (\alpha)) (1+\delta) = (1-\delta^2) |\log\alpha|$ yields
\begin{equation}\label{Probnuh}
\Prob{\nu}{\nu \leq \tau < \nu + h_\delta (\alpha)} \le \kappa^{(\nu)}_{\delta,\alpha}(\tau) + \sup_{\nu \ge 1} \beta^{(\nu)}_{\delta,\alpha},
\end{equation}
where
\[
 \kappa^{(\nu)}_{\delta,\alpha}(\tau) =e^{(1-\delta^2) |\log\alpha|} \Prob{\infty}{0 \le \tau -\nu < h_\delta(\alpha)}
\]
and
\[
\beta^{(\nu)}_{\delta,\alpha}= \Prob{\nu}{\max_{0 \le n <  h_\delta(\alpha) }\sum_{i=\nu}^{\nu+n} Z_{i,\nu} \ge (1+\delta) \gfun(h_\delta (\alpha))}.
\]
Since $\gfun(h_\delta (\alpha))\to \infty$ as $\alpha \to 0$,  by condition \eqref{llr:upper}, 
\begin{equation}\label{subbetato0}
\sup_{\nu \ge 1} \beta^{(\nu)}_{\delta,\alpha}\to 0.
\end{equation}

Next we turn to the evaluation of the term $\kappa^{(\nu)}_{\delta,\alpha}(\tau)$ for any stopping time $\tau \in  {\cal C}_\alpha$. 
It follows from Lemma 2.1 in \cite[page 72]{tartakovsky_qcd2020} that for any $M < \alpha^{-1}$, there exists some $\ell \geq 1$ (possibly depending on $\alpha$) such that
\begin{equation}
\label{eq:thm1_lemma}
 \Prob{\infty}{\ell \leq \tau < \ell+M} \le \Prob{\infty}{\tau < \ell + M | \tau \geq \ell}  < M \, \alpha,
\end{equation}
so for some $\nu \ge 1$,
\[
    \kappa^{(\nu)}_{\delta,\alpha}(\tau) \le M  \alpha e^{(1-\delta^2) |\log\alpha|} = M\alpha^{\delta^2}.
\]
If we choose $M  \le M_\alpha= \floor{h_\delta(\alpha)^2}\Big|_{\delta = 0} = \floor{(\gfun^{-1}(|\log\alpha|))^2}$, then for all sufficiently small $\alpha$,
\begin{equation*}
    \log M \leq 2 \log \gfun^{-1}(|\log\alpha|) = o(|\log\alpha|)
\end{equation*}
so that the condition \eqref{llr:grow_cond} is satisfied. Furthermore, 
\begin{equation*}
    M_\alpha \, \alpha^p \xrightarrow{\alpha \to 0} 0 \quad \text{as}~ \alpha\to 0
\end{equation*}
for any $p > 0$. To see this, assume for purpose of contradiction that there exists some $p_0 > 0$ and $c_0 > 0$ such that 
$\lim_{\alpha \to 0} M_\alpha \alpha^{p_0} = c_0$. Then, since $\lim_{\alpha \to 0} \alpha^{-p_0} \neq 0$, 
$\lim_{\alpha \to 0} \log M_\alpha = p_0 \lim_{\alpha \to 0} |\log \alpha| + \log c_0$ and thus $\log M_\alpha \neq o(|\log\alpha|)$. Hence, it follows that for some 
$\nu \ge 1$, which may depend on $\alpha$, as $\alpha \to 0$
\begin{equation}\label{kappainfto0}
\inf_{\tau\in\Ccal} \kappa^{(\nu)}_{\delta,\alpha}(\tau) \le M_\alpha \alpha^{\delta^2} \to 0.
\end{equation}

Combining \eqref{Probnuh}, \eqref{subbetato0}, and \eqref{kappainfto0} we obtain that for some $\nu\ge 1$
\[
\Prob{\nu}{\nu \leq \tau < \nu + h_\delta (\alpha)} \le M_\alpha \alpha^{\delta^2}  + \sup_{\nu \ge 1} \beta^{(\nu)}_{\delta,\alpha} = o(1),
\]
where the $o(1)$ term is uniform over all $\nu\ge 1$. This yields assertion \eqref{supProb0}, and the proof is complete.
\end{proof}

\subsection{Asymptotically Optimal Detection for Non-stationary Post-Change Observations with Known Distributions}
\label{subsec:asym-opt-det}

Recall that under the classical setting, Page's CuSum procedure (in \eqref{defstoppingrule}) is optimal and has the following structure:
\begin{equation}
    \tau_{\text{Page}}\left(b\right) = \inf \left\{n: \max_{1\leq k \leq n+1} \sum_{i=k}^n Z_i \geq b \right\}
\end{equation}
where $Z_i$ is the log-likelihood ratio when the post-change distributions are stationary. When the post-change distributions are potentially non-stationary, the CuSum stopping rule is defined similarly as:
\begin{equation}
\label{TC:def}
    \tau_C\left(b\right) := \inf \left\{n:\max_{1 \leq k \leq n+1} \sum_{i=k}^{n} Z_{i,k} \geq b \right\}
\end{equation}
where $Z_{i,k}$ represents the log-likelihood ratio between densities $p_{1,i,k}$ and $p_0$ for observation $X_i$ (defined in \eqref{llr:def}). Here $i$ is the time index and $k$ is the hypothesized change point. Note that if the post-change distributions are indeed stationary, i.e., $p_{1,i,k} \equiv p_1$, we would get $Z_{i,k} \equiv Z_i$ for all $k \leq i$, and thus $\tau_C \equiv \tau_{\text{Page}}$.

Page's classical CuSum algorithm admits a recursive way to compute its test statistic. Unfortunately, despite having independent observations, the test statistic in \eqref{TC:def} cannot be computed recursively, even for the special case where the post-change distribution is invariant to the change-point as in Example~\ref{gex}.

\begin{example}
Consider the Gaussian Exponential Mean-Change problem defined in Example~\ref{gex}. Suppose $\mu_0 = \sigma_0^2 = \theta = 1$. Then, the log-likelihood ratio is given by
\[ Z_{n,t} = (e^{n-t} - 1) X_n - \frac{e^{2 (n-t)} - 1}{2}. \]
Note that $Z_{n,t}$ is a (linear) function of $X_n$. Consider the following realization:
\[ X_1 = 1, \quad X_2 = 0, \quad X_3 = 10. \]
It can be verified that
\[ \arg \max_{1 \leq k \leq 3} \sum_{i=k}^{2} Z_{i,k} = 2,~\text{and}~\arg \max_{1 \leq k \leq 4} \sum_{i=k}^{3} Z_{i,k} = 1. \]
Note that maximizer $k^*$ goes backward in time in this case, in contrast to what happens when both the pre- and post-change observations follow i.i.d. models. The test statistic at time $n=2$ is a function of only $X_2$, and this is insufficient to construct the test statistic at time $n=3$, which is a function of $X_1$, in addition to being a function of $X_2$, and $X_3$.
\end{example}

For computational tractability we therefore consider a window limited version of the CuSum procedure in \eqref{TC:def}:
\begin{equation}
\label{TC:test}
    \td{\tau}_C\left(b\right) := \inf \left\{n:\max_{n-m \leq k \leq {n+1}} \sum_{i=k}^n Z_{i,k} =: W(n) \geq b \right\}
\end{equation}
where $m$ is the window size. For $n < m$ maximization is performed over $1\le k \le n$. In the asymptotic setting, $m=m_\alpha$ depends on $\alpha$ and should go to infinity as $\alpha \to 0$ with certain appropriate rate. 
Specifically, following a similar condition that Lai~\cite{lai1998} used in the asymptotically stationary case, we shall require that $m_\alpha \to \infty$ as $\alpha \to 0$ 
in such a way that
\begin{equation}
\label{TC:m_alpha}
    \liminf_{\alpha \to 0} m_\alpha / \gfun^{-1}(\abs{\log\alpha}) > 1.
\end{equation}

Since the range for the maximum is smaller in $\td{\tau}_C(b)$ than in $\tau_C(b)$, given any realization of $X_1,X_2,\ldots$, if the test statistic of $\td{\tau}_C(b)$ crosses the threshold $b$ at some time $n$, so does that of $\tau_C(b)$. Therefore, for any fixed threshold $b > 0$,
\begin{equation} \label{eq:tautautil}
    \tau_C(b) \leq \td{\tau}_C(b)
\end{equation}
almost surely.

In the following, we first control the asymptotic false alarm rate of $\td{\tau}_C(b)$ with an appropriately chosen threshold in Lemma~\ref{TC:fa}. Then we obtain asymptotic approximation of the expected detection delays of $\td{\tau}_C(b)$ in Theorem \ref{TC:delay}. Finally, we combine these two results and provide an asymptotically optimal solution to the problem in \eqref{prob_def} in Theorem~\ref{TC:asymp_opt}.



\begin{lemma}
\label{TC:fa}
Suppose that $b_\alpha = \abs{\log\alpha}$. Then
\begin{equation}
\label{FARCUSUMm}
 \FAR{\td{\tau}_C(b_\alpha)} \leq \alpha \quad \text{for all} ~ \alpha \in (0,1),
\end{equation}
i.e., $\td{\tau}_C(b_\alpha) \in \Ccal$.
\end{lemma}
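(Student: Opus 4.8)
The plan is to leverage the a.s.\ inequality $\tau_C(b)\le\td{\tau}_C(b)$ from \eqref{eq:tautautil}, which reduces the claim to the mean-time-to-false-alarm bound $\E{\infty}{\tau_C(b_\alpha)}\ge e^{b_\alpha}=\alpha^{-1}$ for the non-window-limited CuSum rule in \eqref{TC:def}: once this is in hand, $\FAR{\td{\tau}_C(b_\alpha)}=1/\E{\infty}{\td{\tau}_C(b_\alpha)}\le 1/\E{\infty}{\tau_C(b_\alpha)}\le\alpha$. The non-stationarity and change-point dependence of the densities $p_{1,i,k}$ will enter only through the normalization $\int p_{1,i,k}\,\D\mu=1$, so this is essentially the classical CuSum false-alarm argument.

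To bound $\E{\infty}{\tau_C(b)}$, I would introduce the Shiryaev--Roberts-type statistic $R_n:=\sum_{k=1}^{n}\Lambda_k^n$, where $\Lambda_k^n:=e^{\sum_{i=k}^n Z_{i,k}}$ and $R_0:=0$, and first verify that $R_n-n$ is a $\mathbb{P}_\infty$-martingale with respect to $({\cal F}_n)_{n\ge0}$. This rests on the single identity $\E{\infty}{e^{Z_{i,k}}\mid{\cal F}_{i-1}}=\int(p_{1,i,k}/p_0)\,p_0\,\D\mu=1$, valid because $X_i$ has density $p_0$ under $\mathbb{P}_\infty$: for each fixed $k$ the process $(\Lambda_k^n)_{n\ge k}$ is a $\mathbb{P}_\infty$-martingale, and the fresh term $\Lambda_n^n$ added at time $n$ has conditional mean $1$, whence $\E{\infty}{R_n\mid{\cal F}_{n-1}}=R_{n-1}+1$ and $\E{\infty}{R_n}=n$.

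The second step is a pathwise domination of the CuSum statistic by $R_n$ at a threshold crossing. Writing $W_C(n):=\max_{1\le k\le n+1}\sum_{i=k}^n Z_{i,k}$ for the statistic in \eqref{TC:def}, we have $e^{W_C(n)}=\max_{1\le k\le n+1}\Lambda_k^n$, and since the empty-sum term $\Lambda_{n+1}^n=1$ cannot be the maximizer when $W_C(n)\ge b>0$, we get $e^{W_C(n)}=\max_{1\le k\le n}\Lambda_k^n\le\sum_{k=1}^n\Lambda_k^n=R_n$ on $\{W_C(n)\ge b\}$; hence $\tau_C(b)\ge\sigma_b:=\inf\{n:R_n\ge e^b\}$, and it suffices to prove $\E{\infty}{\sigma_b}\ge e^b$. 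This is the third, routine step: applying the optional-stopping identity at the bounded time $n\wedge\sigma_b$ gives $\E{\infty}{R_{n\wedge\sigma_b}}=\E{\infty}{n\wedge\sigma_b}$; letting $n\to\infty$, monotone convergence handles the right side while Fatou's lemma together with $R_{\sigma_b}\ge e^b$ on $\{\sigma_b<\infty\}$ handles the left, yielding $\E{\infty}{\sigma_b}\ge e^b\,\Prob{\infty}{\sigma_b<\infty}$; and if $\Prob{\infty}{\sigma_b<\infty}<1$ then $\E{\infty}{\sigma_b}=\infty$ and the bound is trivial. Substituting $b_\alpha=\abs{\log\alpha}$ (positive for $\alpha\in(0,1)$) then gives $\E{\infty}{\td{\tau}_C(b_\alpha)}\ge e^{b_\alpha}=\alpha^{-1}$, i.e.\ $\FAR{\td{\tau}_C(b_\alpha)}\le\alpha$ for all $\alpha\in(0,1)$.

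I do not anticipate a genuine obstacle. The only points requiring care are the optional-stopping/integrability passage (with the separate, trivial treatment of $\{\sigma_b=\infty\}$) and the small but essential observation that $b_\alpha>0$ ensures the ``reset'' value $1$ in the CuSum statistic never spoils the domination $e^{W_C(n)}\le R_n$; everything else is the classical i.i.d.\ CuSum argument, which goes through verbatim because the non-stationary post-change densities still integrate to one.
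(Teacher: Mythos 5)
Your proof is correct and follows essentially the same route as the paper: the same Shiryaev--Roberts-type statistic $R_n=\sum_{k=1}^n e^{\sum_{i=k}^n Z_{i,k}}$, the martingale property of $R_n-n$ under $\mathbb{P}_\infty$, pathwise domination of the CuSum crossing by $\{R_n\ge e^b\}$, and optional stopping to get $\E{\infty}{\td{\tau}_C(b_\alpha)}\ge e^{b_\alpha}=\alpha^{-1}$. The only cosmetic differences are that you pass through the full-history CuSum via \eqref{eq:tautautil} (the paper dominates $\td{\tau}_C(b)$ by $T_b$ directly) and that you handle the optional-stopping technicality with bounded stopping times plus Fatou rather than the paper's optional sampling theorem with the liminf condition; both are fine.
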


\begin{proof}
Define the statistic 
\[
    R_n = \sum_{k=1}^n \exp\left(\sum_{i=k}^n Z_{i,k}\right), \quad R_0=0
\]
and the corresponding stopping time $T_b:=\inf\{n: R_n \ge e^b\}$. We now show that 
$\E{\infty}{T_b} \ge e^b$, which implies that $\E{\infty}{\td{\tau}_C(b)} \ge e^b$ for any $b>0$ since, evidently, 
$\td{\tau}_C(b) \ge T_b$ for any $b>0$. Recall that ${\cal F}_n=\sigma(X_\ell, 1\le \ell \le n)$ denotes a sigma-algebra generated by $(X_1,\dots,X_n)$. 
Since $\E{\infty}{e^{Z_{n,k}}|{\cal F}_{n-1}}=1$, it is easy to see that
\[
    \E{\infty}{R_n | {\cal F}_{n-1}} = 1 + R_{n-1} \quad \text{for}~ n \ge 1.
\]
Consequently, the statistic $\{R_n-n\}_{n \ge 1}$ is a zero-mean $(\mathbb{P}_\infty,{\cal F}_n)$-martingale. It suffices to assume that $\E{\infty}{T_b}<\infty$ 
since otherwise the statement is trivial. Then, $\E{\infty}{R_{T_b}-T_b}$ exists and also
\[
    \liminf_{n\to\infty} \int_{\{T_b >n\}} |R_n - n| \mathrm{d} \mathbb{P}_\infty = 0
\]
since $0 \le R_n < e^b$ on the event $\{T_b >n\}$. Hence, we can apply the optional sampling theorem 
(see, e.g. \cite[Th 2.3.1, page 31]{tartakovsky_sequential}), which yields 
$\E{\infty}{R_{T_b}} = \E{\infty}{T_b}$. Since $R_{T_b} \ge e^b$ it follows that $\E{\infty}{\td{\tau}_C(b)} \ge \E{\infty}{T_b} \ge e^b$.

Now, setting $b_\alpha = \abs{\log\alpha}$ implies the inequality   
\begin{equation} \label{FARCUSUMbalpha}
\E{\infty}{\td{\tau}_C(b_\alpha)} \ge e^{b_\alpha} =  \frac{1}{\alpha}
\end{equation}
(for any $m_\alpha \ge 1$), and therefore \eqref{FARCUSUMm} follows.
\end{proof}

The following result establishes asymptotic performance of the WL-CuSum procedure given in \eqref{TC:test} for large threshold values.

\begin{theorem}
\label{TC:delay}
Fix $\delta \in (0,1)$ and let $N_{b,\delta} := \lfloor g^{-1}(b /(1-\delta)) \rfloor$.
Suppose that in the WL-CuSum procedure the size of the window $m=m_b$ diverges (as $b \to \infty$) in such a way that
\begin{equation}\label{Condmb}
    m_b \ge N_{b,\delta} (1+o(1)).
\end{equation}
Further, suppose that conditions \eqref{llr:upper} and \eqref{llr:lower}  hold for $Z_{n,k}$ when $n \geq k \geq 1$. Then, as $b \to \infty$,
\begin{equation}\label{SADDasympt}
    \SADD{\td{\tau}_C(b)}  \sim  \WADD{\td{\tau}_C(b)} \sim \gfun^{-1} (b) .
\end{equation}
\end{theorem}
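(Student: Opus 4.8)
The plan is to establish matching upper and lower bounds on the detection delay. Since $\WADD{\td\tau_C(b)}\ge\SADD{\td\tau_C(b)}\ge\E{\nu}{(\td\tau_C(b)-\nu)^+}$ for every $\nu$, the lower bound $\SADD{\td\tau_C(b)}\gtrsim\gfun^{-1}(b)$ could in principle be inherited from Theorem~\ref{llr:delay_lower_bound} once the false-alarm rate of $\td\tau_C(b)$ is pinned down via Lemma~\ref{TC:fa} (with $b=|\log\alpha|$); but the cleaner route, and the one I would take, is to prove directly both that $\E{\nu}{(\td\tau_C(b)-\nu+1)^+\mid\mc F_{\nu-1}}\le\gfun^{-1}(b)(1+o(1))$ uniformly in $\nu$ (giving the $\WADD$ upper bound, hence the $\SADD$ upper bound) and that some $\E{\nu}{(\td\tau_C(b)-\nu)^+}\ge\gfun^{-1}(b)(1+o(1))$ (giving the $\SADD$ lower bound), so that all three quantities in \eqref{SADDasympt} are squeezed to $\gfun^{-1}(b)$.

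For the \textbf{upper bound}, fix $\nu\ge1$ and work under $\mathbb P_\nu$. The key observation is that at time $n=\nu+t-1$ the window-limited statistic $W(n)$ includes the term $k=\nu$ whenever $t-1\le m_b$, i.e. $W(\nu+t-1)\ge\sum_{i=\nu}^{\nu+t-1}Z_{i,\nu}$ for $t\le m_b+1$. By condition~\eqref{llr:lower} (the left-tail condition), $\sum_{i=\nu}^{\nu+t-1}Z_{i,\nu}\ge(1-\delta)\gfun_\nu(t)\ge(1-\delta)\gfun(t)^{\text{-type quantity}}$ with probability tending to $1$; more precisely, for $t=N_{b,\delta}=\lfloor\gfun^{-1}(b/(1-\delta))\rfloor$ we get $(1-\delta)\gfun_\nu(t)\ge b$ eventually, so $W(\nu+N_{b,\delta}-1)\ge b$ with probability $1-o(1)$, uniformly in $\nu$ by the $\max_{t\ge\nu\ge1}$ in \eqref{llr:lower}. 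Since $m_b\ge N_{b,\delta}(1+o(1))$, the window is wide enough for $k=\nu$ to still be in range. Hence $\Prob{\nu}{\td\tau_C(b)-\nu+1>N_{b,\delta}}\to0$ uniformly in $\nu$. To convert this into an expectation bound one argues by renewal/regeneration: if the statistic fails to cross in the first block of length $N_{b,\delta}$ starting at $\nu$, restart the argument from $\nu+N_{b,\delta}$ (where \eqref{llr:lower} again applies with $t=\nu+N_{b,\delta}$ in place of $\nu$), obtaining a geometric-tail bound $\Prob{\nu}{\td\tau_C(b)-\nu+1>jN_{b,\delta}}\le\rho^{j}$ for $j$ large and some $\rho<1$; summing gives $\E{\nu}{(\td\tau_C(b)-\nu+1)^+\mid\mc F_{\nu-1}}\le N_{b,\delta}(1+o(1))=\gfun^{-1}(b/(1-\delta))(1+o(1))$. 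Taking $\sup_\nu$ and then letting $\delta\to0$, using continuity of $\gfun^{-1}$, yields $\WADD{\td\tau_C(b)}\le\gfun^{-1}(b)(1+o(1))$, hence the same for $\SADD$.

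For the \textbf{lower bound}, I would again use $\SADD{\td\tau_C(b)}\ge\E{\nu}{(\td\tau_C(b)-\nu)^+}$ and a change-of-measure/Markov-inequality argument parallel to the proof of Theorem~\ref{llr:delay_lower_bound}: for $h=\gfun^{-1}((1-\delta)b)$ one shows $\Prob{\nu}{\td\tau_C(b)-\nu<h}\to0$ for some $\nu$. The essential input is that on $\{\nu\le n<\nu+h\}$ every window sum $\sum_{i=k}^n Z_{i,k}$ with $n-m_b\le k\le n+1$ stays below $b$ with high probability — for $k\ge\nu$ this follows from the right-tail control \eqref{llr:upper} applied to $\gfun_k$, and for $k<\nu$ (the pre-change-overlap terms) from a Doob maximal-inequality bound on the $\mathbb P_\nu$-martingale-like sums of pre-change increments, so that the walk cannot reach $b$ before roughly $\gfun^{-1}(b)$ steps; combined with the false-alarm constraint controlling the contribution of paths that stop too early (via Lemma~2.1 of \cite{tartakovsky_qcd2020} exactly as in Theorem~\ref{llr:delay_lower_bound}). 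Then $\E{\nu}{(\td\tau_C(b)-\nu)^+}\ge h(1-o(1))$, and $\delta\to0$ gives $\SADD{\td\tau_C(b)}\ge\gfun^{-1}(b)(1+o(1))$.

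The \textbf{main obstacle} is the upper-bound regeneration step: because the post-change observations are non-stationary and the post-change law may depend on $\nu$, the "restart at $\nu+N_{b,\delta}$" is not a genuine renewal — one must invoke \eqref{llr:lower} in its full $\max_{t\ge\nu\ge1}$ form (so the fresh block starting at $t=\nu+jN_{b,\delta}$ still has its own left-tail guarantee relative to $\gfun_\nu$), and one must check that the window of size $m_b\ge N_{b,\delta}(1+o(1))$ is wide enough to keep the relevant hypothesized change-point in range at the relevant time, uniformly in $j$ up to the (logarithmically many) blocks that matter. Making the "$1+o(1)$" in \eqref{Condmb} interact correctly with the "$1+o(1)$" coming from $(1-\delta)\gfun_\nu(N_{b,\delta})\ge b$, and confirming the geometric decay constant $\rho$ can be taken uniform in $\nu$, is where the real care is needed; the false-alarm side and the right-tail side are routine given Lemma~\ref{TC:fa}, \eqref{llr:upper}, and the technique already displayed in the proof of Theorem~\ref{llr:delay_lower_bound}.
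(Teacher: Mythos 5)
Your upper bound is essentially the paper's argument: decompose $\E{\nu}{(\td{\tau}_C(b)-\nu)^+ \mid \mc F_{\nu-1}}$ into blocks of length $N_{b,\delta}$ anchored at $K_n=\nu+nN_{b,\delta}$, note that the window condition keeps $k=K_{n-1}$ in range at time $K_n$ so that $W(K_n)\ge\lambda_{K_n,K_{n-1}}$, use independence of the observations to factor $\Prob{\nu}{\td{\tau}_C(b)-\nu>\ell N_{b,\delta}\mid\mc F_{\nu-1}}\le\prod_{n=1}^{\ell}\Prob{\nu}{\lambda_{K_n,K_{n-1}}<b}\le\eps_b^{\ell}$ via the uniform ($\max_{t\ge\nu\ge1}$) form of \eqref{llr:lower}, and sum the geometric series to get $N_{b,\delta}/(1-\eps_b)$. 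Your worry about the ``restart'' not being a true renewal is resolved exactly the way you indicate, and the fact that each new block uses a hypothesized change point $K_{n-1}>\nu$ while the data are generated under $\mathbb{P}_\nu$ is precisely why \eqref{llr:lower} is stated with the double index $t\ge\nu\ge1$.

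The deviation is in the lower bound, and the route you say you would prefer has a real gap. Condition \eqref{llr:upper} only controls the sums $\sum_{i=\nu}^{\nu+t-1}Z_{i,\nu}$ started at the \emph{true} change point; under $\mathbb{P}_\nu$ the observations $X_i$, $i\ge\nu$, are drawn from $p_{1,i,\nu}$, so for a hypothesized change point $k\neq\nu$ the window sums $\sum_{i=k}^{n}Z_{i,k}$ are not covered by \eqref{llr:upper}, and nothing in the theorem's hypotheses supplies the ``Doob maximal-inequality bound'' on the pre-change overlap terms (no variance or tail control of $Z_{i,k}$ under $p_0$ is assumed --- Lemma~\ref{llr:lemma} gives sufficient conditions for \eqref{llr:upper}--\eqref{llr:lower} but those are not hypotheses of this theorem). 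So the claim that ``every window sum stays below $b$ with high probability'' cannot be established from \eqref{llr:upper} and \eqref{llr:lower} alone. Fortunately the detour is unnecessary: as you note in passing, Lemma~\ref{TC:fa} gives $\FAR{\td{\tau}_C(b)}\le e^{-b}$, hence $\td{\tau}_C(b)\in\mathcal{C}_{\alpha}$ with $\alpha=e^{-b}$, and the universal lower bound \eqref{LowerSADD} of Theorem~\ref{llr:delay_lower_bound} --- which is uniform over $\tau\in\Ccal$ and makes no reference to the structure of the stopping rule --- immediately yields $\SADD{\td{\tau}_C(b)}\ge\gfun^{-1}(b)(1+o(1))$. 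That inheritance is the paper's argument, and it is the one you should take rather than the ``direct'' one.
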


\begin{proof}
Since $\FAR{\td{\tau}_C(b)} \le e^{-b}$, the WL-CuSum procedure $\td{\tau}_C(b)$ belongs to class $\Ccal$ with $\alpha=e^{-b}$. 
Hence, replacing $\alpha$ by $e^{-b}$ in  the asymptotic lower bound \eqref{LowerSADD} 
in Theorem~\ref{llr:delay_lower_bound}, we obtain that under condition \eqref{llr:upper} the following asymptotic lower bound holds:
\begin{equation}\label{LBtaumb}
 \liminf_{b\to\infty} \frac{\WADD{\td{\tau}_C(b))}}{g^{-1}(b)}  \ge \liminf_{b\to\infty} \frac{\SADD{\td{\tau}_C(b))}}{g^{-1}(b)} \ge  1 .
\end{equation}
Thus, to establish \eqref{SADDasympt} it suffices to show that under condition \eqref{llr:lower} as $b\to\infty$
\begin{equation}\label{UpperSADD}
\WADD{\td{\tau}_C(b))} \leq g^{-1}(b) (1+o(1)).
\end{equation}

Note that we have the following chain of equalities and inequalities:
\begin{align} \label{ExpkTAplus}
&\E{\nu}{(\td{\tau}_C(b))-\nu)^+ | {\cal F}_{\nu-1}}    \nonumber
\\
&= \sum_{\ell=0}^{\infty} \int_{\ell N_{b,\delta}}^{(\ell+1) N_{b,\delta}}  \Prob{\nu}{\td{\tau}_C(b)-\nu > t | {\cal F}_{\nu-1}} \, \D t    \nonumber\\
& \le N_{b,\delta} + \sum_{\ell =1}^{\infty} \int_{\ell N_{b,\delta}}^{(\ell +1) N_{b,\delta}} \Prob{\nu}{\td{\tau}_C(b)-\nu > t |{\cal F}_{\nu-1}} \, \D t  \nonumber\\
& \le  N_{b,\delta} + \sum_{\ell=1}^{\infty} \int_{\ell N_{b,\delta}}^{(\ell+1) N_{b,\delta}}   \Prob{\nu}{\td{\tau}_C(b)-\nu > \ell N_{b,\delta} | {\cal F}_{\nu-1}} \, \D t  \nonumber\\
& = N_{b,\delta} \left(1 + \sum_{\ell=1}^{\infty}  \Prob{\nu}{\td{\tau}_C(b)-\nu > \ell N_{b,\delta} | {\cal F}_{\nu-1}}\right) .  
\end{align}
Define $\lambda_{n,k} := \sum_{i=k}^n Z_{i,k}$ and $K_{n} := \nu+n N_{b,\delta}$. We have $W(n) = \max_{n-m_b < k \le n} \lambda_{k,n}$. 
Since by condition \eqref{Condmb} $m_b >  N_{b,\delta}$ (for a sufficiently large $b$), for any $n \ge 1$,
\[
    W(\nu+n N_{b,\delta}) \ge \lambda_{K_{n}, K_{n-1}}
\]
and we have
\begin{align}\label{Needit}
&\Prob{\nu}{\td{\tau}_C(b)-\nu > \ell N_{b,\delta} | {\cal F}_{\nu-1} } \nonumber
\\
& =\Prob{\nu}{W(1) <  b, \dots, W(\nu+\ell N_{b,\delta}) <b | {\cal F}_{\nu-1}} \nonumber
\\
& \le  \Prob{\nu}{W(\nu + N_{b,\delta}) < b, \dots, W(\nu+\ell N_{b,\delta}) <b  | {\cal F}_{\nu-1}} \nonumber
\\
& \le  \Prob{\nu}{\lambda_{K_1, K_0}< b, \dots, \lambda_{K_{\ell}, K_{\ell-1}} < b  | {\cal F}_{\nu-1}}\nonumber
\\
&= \prod_{n=1}^\ell \Prob{\nu}{\lambda_{K_{n}, K_{n-1}}< b} ,
\end{align}
where the last equality follows from independence of the increments of $\{\lambda_{t,n}\}_{n \ge t}$.

By condition \eqref{llr:lower}, for a sufficiently large $b$ there exists a small $\eps_b$ such that
\[
    \Prob{\nu}{\lambda_{K_{n}, K_{n-1}} < b} \le \eps_b, \quad \forall n \ge 1.
\]
Therefore, for any $\ell \ge 1$, 
\[
    \Prob{\nu}{\td{\tau}_C(b)-\nu > \ell N_{b_\alpha,\delta} | {\cal F}_{\nu-1}}  \le \varepsilon_b^\ell.
\]
Combining this inequality with \eqref{ExpkTAplus} and using the fact that $\sum_{\ell=1}^\infty \eps_b^\ell = \eps_b(1-\eps_b)^{-1}$ , 
we obtain
\begin{align}\label{Momineqrho}
    \E{\nu}{(\td{\tau}_C(b)-\nu)^+ | {\cal F}_{\nu-1} } \le N_{b,\delta} \left(1 + \frac{\eps_b}{1-\eps_b}\right) = \frac{\lfloor g^{-1}(b /(1-\delta)) \rfloor}{1-\varepsilon_b}.
\end{align}
Since the right-hand side of this inequality does not depend on $\nu$, $g^{-1}(b /(1-\delta))\to \infty$ as $b\to \infty$ and $\eps_b$ and $\delta$ can be arbitrarily small numbers, this implies the upper bound \eqref{UpperSADD}. The proof is complete. 
\end{proof}

Using Lemma~\ref{TC:fa} and Theorem~\ref{TC:delay}, we obtain the following asymptotic result which establishes asymptotic optimality of the WL-CuSum procedure and its asymptotic operating characteristics.

\begin{theorem}
\label{TC:asymp_opt}
Suppose that threshold $b_\alpha$ is so selected that $b_\alpha \sim|\log \alpha|$ as $\alpha\to 0$, in particular as $b_\alpha =|\log \alpha|$. 
Further, suppose that left-tail \eqref{llr:upper} and right-tail \eqref{llr:lower} conditions hold for $Z_{n,k}$ when $n \geq k \geq 1$.  
Then, the WL-CuSum procedure in \eqref{TC:test} with the window size $m_\alpha$ that satisfies the condition 
\begin{equation}\label{Condmalpha}
m_\alpha \ge g^{-1}(|\log\alpha|) (1+o(1)) \quad \text{as}~ \alpha \to 0
\end{equation}
solves the problems \eqref{prob_def} and \eqref{Pollak_problem} asymptotically to first order as $\alpha \to 0$, i.e., 
\begin{equation} \label{FOAOCUSUM}
\begin{split}
\inf_{\tau\in \Ccal} \WADD{\tau} & \sim  \WADD{\td{\tau}_C(b_\alpha)} ,
\\
\inf_{\tau\in \Ccal} \SADD{\tau} & \sim \SADD{\td{\tau}_C(b_\alpha)} 
\end{split}
\end{equation}
and
\begin{equation} \label{FOAPPRCUSUM} 
   \SADD{\td{\tau}_C(b_\alpha)} \sim  \WADD{\td{\tau}_C(b_\alpha)}  \sim \gfun^{-1} (\abs{\log{\alpha}}).
\end{equation}
\end{theorem}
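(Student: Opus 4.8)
The plan is to combine the two preceding results — the lower bound of Theorem~\ref{llr:delay_lower_bound} and the upper bound of Theorem~\ref{TC:delay} — and simply match them up under the stated choice of threshold. The key observation is that Lemma~\ref{TC:fa} guarantees $\td{\tau}_C(b_\alpha) \in \Ccal$ when $b_\alpha = |\log\alpha|$, so the lower bound \eqref{LowerSADD} applies to $\td\tau_C(b_\alpha)$ itself: $\inf_{\tau\in\Ccal}\WADD{\tau} \le \WADD{\td\tau_C(b_\alpha)}$ trivially, and $\WADD{\td\tau_C(b_\alpha)} \ge \SADD{\td\tau_C(b_\alpha)} \ge \inf_{\tau\in\Ccal}\SADD{\tau} \ge \gfun^{-1}(|\log\alpha|)(1+o(1))$.

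First I would verify that the hypotheses of Theorem~\ref{TC:delay} are met with $b = b_\alpha$: condition \eqref{Condmalpha} on $m_\alpha$ is exactly condition \eqref{Condmb} with $\delta$ sent to $0$ (since $N_{b,\delta} = \lfloor \gfun^{-1}(b/(1-\delta))\rfloor$ and $\gfun^{-1}$ is continuous), and the tail conditions \eqref{llr:upper}, \eqref{llr:lower} are assumed. Applying Theorem~\ref{TC:delay} with an arbitrary fixed $\delta\in(0,1)$ gives $\WADD{\td\tau_C(b_\alpha)} \le \gfun^{-1}(b_\alpha/(1-\delta))(1+o(1))$; since $b_\alpha \sim |\log\alpha|$ and $\gfun^{-1}$ is continuous and increasing, letting $\delta\to 0$ yields the upper bound $\WADD{\td\tau_C(b_\alpha)} \le \gfun^{-1}(|\log\alpha|)(1+o(1))$. (One should be slightly careful here: Theorem~\ref{TC:delay} is stated for fixed $\delta$ with $b\to\infty$, so the passage $\delta\to 0$ needs the standard diagonal-subsequence argument, choosing $\delta = \delta_\alpha \to 0$ slowly enough that the window condition and the $o(1)$ terms remain valid.)

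Chaining the two bounds, for some $\nu\ge 1$,
\begin{equation*}
\gfun^{-1}(|\log\alpha|)(1+o(1)) \le \inf_{\tau\in\Ccal}\SADD{\tau} \le \SADD{\td\tau_C(b_\alpha)} \le \WADD{\td\tau_C(b_\alpha)} \le \gfun^{-1}(|\log\alpha|)(1+o(1)),
\end{equation*}
and likewise $\inf_{\tau\in\Ccal}\WADD{\tau} \le \WADD{\td\tau_C(b_\alpha)} \le \gfun^{-1}(|\log\alpha|)(1+o(1)) \le \inf_{\tau\in\Ccal}\WADD{\tau}(1+o(1))$ using that the lower bound already shows $\inf_{\tau\in\Ccal}\WADD{\tau} \ge \gfun^{-1}(|\log\alpha|)(1+o(1))$. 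Squeezing gives \eqref{FOAPPRCUSUM} and then \eqref{FOAOCUSUM}. Finally, I would remark that the case of general $b_\alpha\sim|\log\alpha|$ (rather than strict equality) is handled by monotonicity of $\td\tau_C$ in $b$ together with continuity of $\gfun^{-1}$, so that both the FAR bound (Lemma~\ref{TC:fa}) and the delay asymptotics degrade only by a $(1+o(1))$ factor.

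The main obstacle — really the only non-bookkeeping point — is the interchange of the limits $\delta\to 0$ and $\alpha\to 0$ (equivalently $b\to\infty$): Theorem~\ref{TC:delay} gives a family of upper bounds indexed by $\delta$, each with its own $o(1)$ and its own window requirement $m_b\ge N_{b,\delta}(1+o(1))$, and one must confirm that the hypothesis \eqref{Condmalpha} is strong enough to support a choice $\delta_\alpha\to 0$ for which $m_\alpha\ge N_{b_\alpha,\delta_\alpha}(1+o(1)) = \gfun^{-1}(|\log\alpha|/(1-\delta_\alpha))(1+o(1))$ still holds; the strict inequality "$\liminf >1$" flavor of such window conditions (cf. \eqref{TC:m_alpha}) is exactly what provides the slack needed here. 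Everything else is a direct substitution of $\alpha = e^{-b}$ and back.
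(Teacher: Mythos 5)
Your proposal is correct and follows essentially the same route as the paper: invoke Lemma~\ref{TC:fa} to place $\td{\tau}_C(b_\alpha)$ in $\Ccal$, take the delay asymptotics from Theorem~\ref{TC:delay} with $b_\alpha\sim|\log\alpha|$, and sandwich against the lower bound of Theorem~\ref{llr:delay_lower_bound}. The $\delta\to 0$ interchange you flag is real but is already absorbed into the statement and proof of Theorem~\ref{TC:delay} (whose conclusion is the two-sided asymptotic $\sim\gfun^{-1}(b)$), so the paper's proof simply cites it rather than re-deriving the upper bound.
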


\begin{proof}
Let $b_\alpha$ be so selected that  $\FAR{\td{\tau}_C(b_\alpha)}\leq \alpha$ and $b_\alpha \sim |\log \alpha|$ as $\alpha\to0$. Then by 
Theorem~\ref{TC:delay}, as $\alpha\to0$
\[
 \SADD{\td{\tau}_C(b_\alpha)}  \sim  \WADD{\td{\tau}_C(b_\alpha)} \sim \gfun^{-1} (|\log \alpha|) .
\]
Comparing these asymptotic equalities with the asymptotic lower bound \eqref{LowerSADD} in
Theorem~\ref{llr:delay_lower_bound} immediately yields asymptotics \eqref{FOAOCUSUM} and \eqref{FOAPPRCUSUM}. 
In particular, if $b_\alpha = \abs{\log\alpha}$, then by Lemma~\ref{TC:fa}  $\FAR{\td{\tau}_C(b_\alpha)}\leq \alpha$, and therefore the assertions hold.\end{proof}

\begin{remark}
Clearly, the asymptotic optimality result still holds in the case where no window is applied, i.e., $m_\alpha = n-1$.
\end{remark}

\begin{example}
Consider the same setting as in Example~\ref{gex}. We have shown that conditions \eqref{llr:var} and \eqref{llr:tshift} hold in this setting, and thus \eqref{llr:upper} and \eqref{llr:lower} also hold by Lemma~\ref{llr:lemma}. Considering the growth function $\gfun_\nu (n)$ given in \eqref{gex:growth}, as $n \to \infty$, we obtain
\begin{equation*}
    \gfun_\nu (n) = \sum_{i=0}^{n-1} \frac{\mu_0^2}{2 \sigma_0^2} (e^{\theta i}-1)^2 = \frac{\mu_0^2}{2 \sigma_0^2} e^{2 \theta (n-1)} (1+o(1)).
\end{equation*}
Thus, as $y \to \infty$,
\begin{equation*}
    \gfun^{-1}(y) = \frac{1}{2 \theta} \log\left( \frac{2 \sigma_0^2}{\mu_0^2} y\right) (1+o(1))
\end{equation*}
and if $b_\alpha = |\log \alpha|$ or more generally $b_\alpha\sim |\log \alpha|$ as $\alpha\to 0$ we obtain
\begin{align}
\label{gex:perf}
    \WADD{\td{\tau}_C(b_\alpha)} &= \frac{1}{2 \theta} \log\left( \frac{2 \sigma_0^2 }{\mu_0^2} \abs{\log \alpha}\right) (1+o(1)) \nonumber\\
    &= O\left(\frac{1}{2 \theta} \log(\abs{\log \alpha})\right).
\end{align}
\end{example}

\section{Asymptotically Optimum Procedure for Non-Stationary Post-Change Observations with Parametric Uncertainty}
\label{sec:unknown-param}

We now study the case where the evolution of the post-change distribution is parametrized by an unknown but deterministic parameter $\theta \in \R^d$. Let $X_\nu, X_{\nu+1}, \dots$
each have density $p_{1,0}^{\theta},p_{1,1}^{\theta},\dots$, respectively, with respect to the common non-degenerate measure $\mu$, when post-change parameter is $\theta$. Let $\mathbb{P}_{k,\theta}$ and $\mathbb{E}_{k,\theta}$ denote, respectively, the probability measure on the entire sequence of observations and expectation when the change point is $\nu=k<\infty$ and the post-change parameter is $\theta$. Let $\Theta \subset \R^d$ be an open and bounded set of parameter values. For any $n \geq k$ and $\theta \in \Theta$ the log-likelihood ratio process is given by
\begin{equation}
\label{TG:llr}
    Z_{n,k}^{\theta} = \log \frac{p_{1,n,k}^{\theta}(X_n)}{p_0(X_n)} .
\end{equation}
Also, the growth function in \eqref{llr:growth} is redefined as
\begin{equation}
\label{TG:growth}
    \gfun_{\nu,\theta}(n) = \sum_{i=\nu}^{\nu+n-1} \E{\nu,\theta}{Z^{\theta}_{i,\nu}},\forall n \geq 1
\end{equation}
and it is assumed that $\gfun_\theta^{-1}(x) = \sup_{\nu \geq 1} \gfun_{\nu,\theta}^{-1}(x)$ exists. It is also assumed that
\begin{equation}
\label{TG:growth_cond}
    \log \gfun_\theta^{-1}(x) = o(x) \quad \text{as}~ x \to \infty.
\end{equation}

The goal in this section is to solve the optimization problems \eqref{prob_def} and \eqref{Pollak_problem} asymptotically as $\alpha \to 0$ under parameter uncertainty. More specifically, for $\theta \in \Theta$, define Lorden's and Pollak's worst-case expected detection delay measures
\[
    \WADDth{\tau}  := \esssup \sup_{\nu \geq 1}  \E{\nu,\theta}{(\tau-\nu+1)^+ | {\cal F}_{\nu-1}}
\]
and 
\[
    \SADDth{\tau} := \sup_{\nu \geq 1}  \E{\nu, \theta}{\tau-\nu+1| \tau \geq \nu}
\]
and the corresponding asymptotic optimization problems: find a change detection procedure $\tau^*$ that minimizes these measures to first order in class 
$\Ccal$, i.e., for all $\theta\in\Theta$,
\begin{equation}\label{AsymptProblems}
    \lim _{\alpha \to 0}\frac{ \inf_{\tau\in \Ccal} \WADDth{\tau}}{\WADDth{\tau^*}} = 1, \quad 
    \lim_{\alpha\to0}\frac{\inf_{\tau\in \Ccal}  \SADDth{\tau}}{\SADDth{\tau^*}} = 1.
\end{equation}

Consider the following WL-GLR-CuSum change detection procedure
\begin{equation}
\label{TG:test}
    \td{\tau}_G\left(b\right) := \inf \left\{n:\max_{n-m_b \leq k \leq {n+1}} \sup_{\theta \in \Theta_b} \sum_{i=k}^n Z_{i,k}^{\theta} \geq b \right\} ,
\end{equation}
where $\Theta_b \nearrow \Theta$ as $b \nearrow \infty$. For $n < m_b$ maximization is performed over $1\le k \le n$. Therefore, it is guaranteed that $\theta \in \Theta_b$ for all large enough $b$. Since we are interested in class 
$\Ccal=\{\tau: \FAR{\tau}\le \alpha\}$, in which case both threshold $b=b_\alpha$ and window size $m_b=m_\alpha$ are the functions of $\alpha$, we will write
$\Theta_{b}=\Theta_\alpha$ and suppose that $\Theta_\alpha \subset \R^d$ is compact for each $\alpha$. Hereafter we omit the dependency of 
$\hat{\theta}_{n,k}$ on $\alpha$ for brevity. In this paper, we focus on the case where $\Theta_\alpha$ is continuous for all $\alpha$'s. 
The discrete case is simpler and will be considered elsewhere.

The following assumption is made to guarantee the existence of an upper bound on FAR.
\begin{assumption}
\label{TG:smooth}
There exists $\eps > 0$ such that for any large enough $b > 0$,
\begin{equation}
\label{TG:llr_sec_der}
    \Prob{\infty}{\max_{(k,n):k \leq n \leq k+m_b} \sup_{\theta: \norm{\theta-\hat{\theta}_{n,k}} < b^{-\frac{\eps}{2}}} \emax{- \nabla_\theta^2 \sum_{i=k}^n Z_{i,k}^{\theta}} \leq 2 b^{\eps}} \geq 1 - \xi_b
\end{equation}
where $\emax{A}$ represents the maximum absolute eigenvalue of a symmetric matrix $A$ and $\xi_b \searrow 0$ as $b \nearrow \infty$.
\end{assumption}


\begin{example}
Consider again the Gaussian exponential mean-change detection problem in Example~\ref{gex}. Now we consider the case where the exact value of the post-change exponent coefficient $\theta$ is unknown and belongs to $\Theta = [\Theta_{\text{min}},\Theta_{\text{max}}]$. Note that $\theta$ characterizes the entire post-change evolution rather than a single post-change distribution.
We shall verify Assumption~\ref{TG:smooth} below.

Recalling the definition of log-likelihood ratio given in \eqref{gex:llr}, for any $\theta \in \Theta$ and $k \leq i \leq n$ where $n-k\leq m_b$, we have 
\begin{align}
    -\frac{\partial^2}{\partial \theta^2} Z^\theta_{i,k} &= -\frac{\partial^2}{\partial \theta^2} \left(\frac{\mu_0}{\sigma_0^2} (e^{\theta(i-k)} - 1) X_i - \frac{\mu_0^2 (e^{2 \theta (i-k)}+1)}{2 \sigma_0^2}\right) \nonumber\\
    &= -\frac{\mu_0}{\sigma_0^2} (i-k)^2 e^{\theta(i-k)} X_i + 2 (i-k)^2 \frac{\mu_0^2 e^{2 \theta (i-k)}}{\sigma_0^2}\nonumber\\
    &= \frac{\mu_0}{\sigma_0^2} (i-k)^2 e^{\theta(i-k)} (2 \mu_0 e^{\theta(i-k)} - X_i).
\end{align}
Therefore,
\begin{align}
\label{gex:sec_der}
    &\max_{(k,n):k \leq n \leq k+m_b} \sup_{\theta \in \Theta} \abs{-\frac{\partial^2}{\partial \theta^2} \sum_{i=k}^n Z^\theta_{i,k}}\nonumber\\
    &= \sup_{\theta \in \Theta} \max_{(k,n):k \leq n \leq k+m_b} \frac{\mu_0}{\sigma_0^2} \abs{\sum_{i=k}^n (i-k)^2 e^{\theta(i-k)} (2 \mu_0 e^{\theta (i-k)} - X_i)} \nonumber\\
    &\leq \sup_{\theta \in \Theta} \frac{\mu_0}{\sigma_0^2} m_b^2 e^{\theta m_b} \left(2 \mu_0 m_b e^{\theta m_b} + \max_{(k,n):k \leq n \leq k+m_b} \abs{\sum_{i=k}^n X_i}\right) \nonumber\\
    &\stackrel{(*)}{\leq} \sup_{\theta \in \Theta} \frac{4\mu_0^2}{\sigma_0^2} m_b^3 e^{2 \theta m_b} \leq \frac{4\mu_0^2}{\sigma_0^2} m_b^3 e^{2 \Theta_{\text{max}} m_b}
\end{align}
where $(*)$ is true provided that
\[
    \max_{(k,n):k \leq n \leq k+m_b} \abs{\sum_{i=k}^n X_i} < 2 \mu_0 m_b e^{\theta m_b}.
\]
Since $X_i$'s are i.i.d. under $\mathbb{P}_\infty$, $\sum_{i=k}^n X_i$ has a Gaussian distribution with mean $\leq (m_b+1) \mu_0$ and variance $\leq (m_b+1) \sigma_0^2$.
Therefore, for any $\theta \in \Theta$,
\begin{align*}
   & \Prob{\infty}{\max_{(k,n):k \leq n \leq k+m_b} \abs{\sum_{i=k}^n X_i} > 2 \mu_0 m_b e^{\theta m_b}}\\
   &\leq \Prob{\infty}{\abs{\sum_{i=1}^{m_b} X_i} > 2 \mu_0 m_b e^{\theta m_b}} \\
   &= 2 Q\left( \frac{2 \mu_0 m_b e^{\theta m_b} - m_b \mu_0}{\sigma_0\sqrt{m_b+1}} \right)\\
   &\leq 2\exp\left(-\frac{2 \mu_0^2 m_b^2 (e^{\theta m_b}-1)^2}{\sigma_0^2 (m_b+1)}\right) \searrow 0~\text{as $b \to \infty$}
\end{align*}
where $Q(x)= (2\pi)^{-1/2} \int_{x}^\infty e^{-t^2/2} \D t$ is the standard Q-function.

Recalling the condition in \eqref{Condmb} on the window size and using the formula  \eqref{gex:perf} for the worst-case expected delay, we obtain that if we set
\begin{equation*}
    m_b = \frac{1}{2 \Theta_{\text{min}}}\log b
\end{equation*}
then
\begin{equation*}
    \frac{4\mu_0^2}{\sigma_0^2} m_b^3 e^{2 \Theta_{\text{max}} m_b} \sim (\log b)^3 b^{\Theta_{\text{max}}/\Theta_{\text{min}}}.
\end{equation*}
Then Assumption~\ref{TG:smooth} holds when $\eps = (1+\delta)\Theta_{\text{max}}/\Theta_{\text{min}}$ with arbitrary $\delta>0$.
\end{example}

Note that  $\WADDth{\td{\tau}_G(b)} \leq \WADDth{\td{\tau}_C(b)}$ for any threshold $b > 0$. In order to establish asymptotic optimality of the WL-GLR-CuSum procedure we need the following lemma that allows us to select threshold $b=b_\alpha$ in such a way that the FAR of $\td{\tau}_G(b)$ is controlled at least asymptotically.

\begin{lemma}
\label{TG:fa}
Suppose that the log-likelihood ratio $\{Z_{n,k}^{\theta}\}_{n\ge k}$ satisfies \eqref{TG:llr_sec_der}. Then, as $b\to\infty$, 
\begin{equation}\label{FARGLR}
    \FAR{\td{\tau}_G(b)} \le |\Theta_\alpha| C_d^{-1} b^{\frac{\eps d}{2}} e^{1-b} (1+o(1)),
\end{equation}
where $C_d =\frac{\pi^{d/2}}{\Gamma(1+d/2)}$ is a constant that does not depend on $\alpha$.
Consequently, if $b=b_\alpha$ satisfies equation
\begin{equation}
\label{TG:thr}
    |\Theta_\alpha| C_d^{-1} b_\alpha^{\frac{\eps d}{2}} e^{1-b_\alpha} = \alpha,
\end{equation}
then $\FAR{\td{\tau}_G(b_\alpha)} \le \alpha (1+o(1))$ as $\alpha\to0$. 
\end{lemma}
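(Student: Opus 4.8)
The plan is to bound the false alarm rate by establishing a lower bound on $\E{\infty}{\td{\tau}_G(b)}$ via a martingale/Wald-type argument applied to an appropriate nonnegative statistic, exactly as in the proof of Lemma~\ref{TC:fa}, but now with the supremum over $\theta$ inside. The obstacle is that $\sup_{\theta\in\Theta_\alpha}\exp(\sum_{i=k}^n Z_{i,k}^\theta)$ does not have unit conditional $\mathbb{P}_\infty$-expectation (unlike a single likelihood ratio), so one cannot directly form a martingale out of the GLR statistic. The standard device — going back to Lorden and used in this non-stationary setting — is to \emph{discretize the parameter space}: cover $\Theta_\alpha$ with a finite grid of balls of radius $r_b = b^{-\eps/2}$, apply the single-$\theta$ martingale bound at each grid point, and use the union bound together with the curvature control \eqref{TG:llr_sec_der} to pay only a polynomial-in-$b$ price for replacing the continuous supremum by the grid maximum.

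First I would set up the grid: since $\Theta_\alpha$ is compact, it can be covered by $N_b$ balls of radius $r_b=b^{-\eps/2}$ centered at points $\theta^{(1)},\dots,\theta^{(N_b)}$, where $N_b \le |\Theta_\alpha| C_d^{-1} r_b^{-d} (1+o(1)) = |\Theta_\alpha| C_d^{-1} b^{\eps d/2}(1+o(1))$ by a standard volumetric covering estimate (the constant $C_d = \pi^{d/2}/\Gamma(1+d/2)$ being the volume of the unit $d$-ball). Next, for each fixed grid point $\theta^{(j)}$, define $R_n^{(j)} := \sum_{k=1}^n \exp(\sum_{i=k}^n Z_{i,k}^{\theta^{(j)}})$ with $R_0^{(j)}=0$; since $\E{\infty}{e^{Z_{n,k}^{\theta^{(j)}}} \mid {\cal F}_{n-1}}=1$, the process $\{R_n^{(j)}-n\}$ is a zero-mean $(\mathbb{P}_\infty,{\cal F}_n)$-martingale, so by the optional sampling argument of Lemma~\ref{TC:fa} the stopping time $T_b^{(j)} := \inf\{n: R_n^{(j)} \ge e^b\}$ satisfies $\E{\infty}{T_b^{(j)}} \ge e^b$.

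Then I would relate the GLR stopping time to the grid. On the high-probability event in \eqref{TG:llr_sec_der} (which has $\mathbb{P}_\infty$-probability at least $1-\xi_b$), a second-order Taylor expansion of $\theta \mapsto \sum_{i=k}^n Z_{i,k}^\theta$ around the nearest grid point shows that for every window $(k,n)$ with $n-k\le m_b$,
\[
\sup_{\theta\in\Theta_\alpha}\sum_{i=k}^n Z_{i,k}^\theta \le \max_{1\le j\le N_b}\sum_{i=k}^n Z_{i,k}^{\theta^{(j)}} + \tfrac12 (2b^\eps) r_b^2 = \max_{1\le j\le N_b}\sum_{i=k}^n Z_{i,k}^{\theta^{(j)}} + 1,
\]
using $r_b^2 = b^{-\eps}$; here I would need the maximizing $\hat\theta_{n,k}$ and its nearest grid point to both lie in the ball appearing in \eqref{TG:llr_sec_der}, which holds for $b$ large since $\Theta_\alpha\nearrow\Theta$ and $\hat\theta_{n,k}\in\Theta_\alpha$. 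Consequently, on that event, $\td\tau_G(b) \ge \min_j T_{b-1}^{(j)}$, hence by the union bound and Markov's inequality, $\mathbb{P}_\infty\{\td\tau_G(b)\le N\} \le \xi_b + \sum_{j=1}^{N_b}\mathbb{P}_\infty\{T_{b-1}^{(j)}\le N\} \le \xi_b + N_b N e^{-(b-1)}$ for any $N$; optimizing (equivalently, the renewal-type bound $\FAR{\td\tau_G(b)} = 1/\E{\infty}{\td\tau_G(b)}$ combined with $\E{\infty}{\td\tau_G(b)} \ge (1-\xi_b)\,\min_j\E{\infty}{T_{b-1}^{(j)}} \ge (1-\xi_b)e^{b-1}$, after a small argument handling the exceptional event) yields
\[
\FAR{\td\tau_G(b)} \le N_b\, e^{1-b}(1+o(1)) \le |\Theta_\alpha| C_d^{-1} b^{\eps d/2} e^{1-b}(1+o(1)),
\]
which is \eqref{FARGLR}. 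The final claim is then immediate: choosing $b_\alpha$ to solve \eqref{TG:thr} gives $\FAR{\td\tau_G(b_\alpha)} \le \alpha(1+o(1))$ as $\alpha\to0$. The main obstacle, as noted, is the passage from the continuous supremum to the finite grid — controlling the Taylor remainder uniformly over all $O(m_b)$ windows simultaneously is precisely what Assumption~\ref{TG:smooth} is designed to supply, and the bookkeeping of the $\xi_b$ and $(1+o(1))$ terms through the union bound requires care but no new ideas.
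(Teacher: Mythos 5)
Your route is genuinely different from the paper's. The paper does not discretize: it places a continuous uniform prior $\Pi$ on $\Theta_b$, forms the mixture likelihood ratios $L_{n,k}=\int_{\Theta_b}\exp(\lambda_{n,k}^{\theta})\,\Pi(\D\theta)$ and the Shiryaev--Roberts-type sum $R_n=\sum_{k\le n}L_{n,k}$, notes that $R_n-n$ is a zero-mean $(\mathbb{P}_\infty,{\cal F}_n)$-martingale, and applies optional sampling once to get $\E{\infty}{R_{\td{\tau}_G(b)}}=\E{\infty}{\td{\tau}_G(b)}$. The factor $C_d\,b^{-\eps d/2}e^{-1}$ then comes from a Laplace-type lower bound on the integral over the ball of radius $b^{-\eps/2}$ around $\hat\theta_{n,k}$, using the same Taylor expansion and Assumption~\ref{TG:smooth} that you invoke. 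So both proofs pay the identical polynomial price $b^{\eps d/2}$ --- yours as a covering number, the paper's as the reciprocal volume of a small ball --- and use \eqref{TG:llr_sec_der} for exactly the same purpose.

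However, your concluding step has a genuine gap. The inequality $\E{\infty}{\td{\tau}_G(b)}\ge(1-\xi_b)\min_j\E{\infty}{T_{b-1}^{(j)}}$ does not follow from $\td{\tau}_G(b)\ge\min_j T_{b-1}^{(j)}$ on the good event: the expectation of a minimum of stopping times is not bounded below by the minimum of their expectations (it can be smaller by a factor of order $N_b$), so this version of the argument fails outright. Your fallback --- the union bound $\Prob{\infty}{\td{\tau}_G(b)\le N}\le\xi_b+N_bNe^{-(b-1)}$ followed by $\E{\infty}{\td{\tau}_G(b)}\ge N\,\Prob{\infty}{\td{\tau}_G(b)>N}$ and optimization over $N$ --- is valid but only yields $\FAR{\td{\tau}_G(b)}\le 4N_be^{1-b}(1+o(1))$ (the product $\delta(1-\delta)$ is maximized at $1/4$); moreover the covering number of $\Theta_\alpha$ by balls of radius $r_b$ is not $|\Theta_\alpha|C_d^{-1}r_b^{-d}(1+o(1))$ --- that is the volumetric \emph{lower} bound, and the upper bound carries an extra dimension-dependent constant. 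So as written you establish \eqref{FARGLR} only up to an unspecified absolute constant, which is enough for Theorem~\ref{TG:asymp_opt} (where only $b_\alpha\sim|\log\alpha|$ matters) but not for the lemma as stated. The clean repair within your framework is to aggregate the grid into a single statistic $R_n=\sum_{j=1}^{N_b}\sum_{k=1}^{n}\exp(\lambda_{n,k}^{\theta^{(j)}})$, observe that $R_n-N_bn$ is a $\mathbb{P}_\infty$-martingale with $R_{\td{\tau}_G(b)}\ge e^{b-1}$ on the good event, and apply optional sampling once to get $\E{\infty}{\td{\tau}_G(b)}\ge e^{b-1}/N_b$ directly --- which is precisely the paper's mixture argument with the continuous uniform prior replaced by the discrete uniform prior on your grid.
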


\begin{remark}
Since $\abs{\Theta_\alpha}  \leq \abs{\Theta}  < \infty$, it follows from \eqref{TG:thr} that $b_\alpha \sim \abs{\log\alpha}$ as $\alpha \to 0$. 
\end{remark}

The proof of Lemma~\ref{TG:fa} is given in the appendix. The following theorem establishes asymptotic optimality properties of the WL-GLR-CuSum detection procedure.


\begin{theorem}
\label{TG:asymp_opt}
Suppose that threshold $b = b_\alpha$ is so selected that $\FAR{\td{\tau}_C(b_\alpha)}\leq \alpha$ or at least so that 
$\FAR{\td{\tau}_C(b_\alpha)}\leq \alpha (1+o(1))$ and $b_\alpha \sim |\log \alpha|$ as $\alpha \to 0$, in particular from equation \eqref{TG:thr} in Lemma~\ref{TG:fa}.
Further, suppose that conditions \eqref{llr:upper}, \eqref{llr:lower} and \eqref{TG:llr_sec_der} hold for $\{Z_{n,k}\}_{n \ge k}$.  
Then, the WL-GLR-CuSum procedure $\td{\tau}_G(b_\alpha)$ defined by \eqref{TG:test} with the window size $m_\alpha$ that satisfies the condition \eqref{Condmalpha} solves first-order asymptotic optimization problems \eqref{AsymptProblems} uniformly for all parameter values
$\theta \in \Theta$, and
\begin{equation} \label{FOAPPRGCUSUM}
   \SADDth{\td{\tau}_G(b_\alpha)} \sim  \WADDth{\td{\tau}_G(b_\alpha)}  \sim \gfun_\theta^{-1} (\abs{\log{\alpha}}),\quad \forall \theta \in \Theta.
\end{equation}
as $\alpha \to 0$.
\end{theorem}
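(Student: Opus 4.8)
\emph{Strategy and lower bound.} I would carry out the same two-sided argument as in the known-parameter case, but separately for each fixed $\theta\in\Theta$ and with $\gfun^{-1}$ replaced throughout by $\gfun_\theta^{-1}$, inserting one extra comparison step to dispose of the supremum over $\Theta_b$. For the lower bound, observe that for a fixed $\theta$ the model with \emph{known} post-change parameter $\theta$ is a special case of the setup of Section~\ref{sec:info-bd}, with growth function $\gfun_{\nu,\theta}$ in place of $\gfun_\nu$ and the \emph{same} false-alarm class $\Ccal$; here \eqref{TG:growth_cond} plays the role of \eqref{llr:grow_cond} and condition \eqref{llr:upper} holds for $\{Z_{n,k}^{\theta}\}_{n\ge k}$ by hypothesis, so Theorem~\ref{llr:delay_lower_bound} applies verbatim and yields, as $\alpha\to0$,
\[
    \inf_{\tau\in\Ccal}\WADDth{\tau}\;\ge\;\inf_{\tau\in\Ccal}\SADDth{\tau}\;\ge\;\gfun_\theta^{-1}(\abs{\log\alpha})\,(1+o(1)).
\]

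\emph{Upper bound by domination by the known-parameter WL-CuSum.} Let $\td{\tau}_C^{\theta}(b)$ denote the WL-CuSum procedure \eqref{TC:test} built from $\{Z_{n,k}^{\theta}\}$ with the same window $m_b$. For every $b$ large enough that $\theta\in\Theta_b$ (eventually true since $\Theta_b\nearrow\Theta$), the GLR statistic dominates the statistic at the true parameter pointwise, $\sup_{\theta'\in\Theta_b}\sum_{i=k}^{n} Z_{i,k}^{\theta'}\ge\sum_{i=k}^{n} Z_{i,k}^{\theta}$, so it crosses $b$ no later and therefore $\td{\tau}_G(b)\le\td{\tau}_C^{\theta}(b)$ a.s., whence $\WADDth{\td{\tau}_G(b)}\le\WADDth{\td{\tau}_C^{\theta}(b)}$ (this is the inequality already noted just before Lemma~\ref{TG:fa}). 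I would then apply Theorem~\ref{TC:delay} to $\{Z_{n,k}^{\theta}\}$ under $\mathbb{P}_{\nu,\theta}$: conditions \eqref{llr:upper}--\eqref{llr:lower} hold by hypothesis, and the window requirement \eqref{Condmalpha} (read with $\gfun_\theta^{-1}$, or with a common $\sup_{\theta\in\Theta}\gfun_\theta^{-1}$ to cover all $\theta$ at once) is exactly hypothesis \eqref{Condmb} there, so \eqref{SADDasympt} gives $\WADDth{\td{\tau}_C^{\theta}(b)}\sim\gfun_\theta^{-1}(b)$ as $b\to\infty$. Using $b_\alpha\sim\abs{\log\alpha}$ (the Remark after Lemma~\ref{TG:fa}) and sending the auxiliary parameter $\delta\to0$ exactly as in the proof of Theorem~\ref{TC:asymp_opt} then gives $\WADDth{\td{\tau}_G(b_\alpha)}\le\gfun_\theta^{-1}(\abs{\log\alpha})(1+o(1))$, and a fortiori $\SADDth{\td{\tau}_G(b_\alpha)}\le\WADDth{\td{\tau}_G(b_\alpha)}\le\gfun_\theta^{-1}(\abs{\log\alpha})(1+o(1))$.

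\emph{Closing the argument and uniformity.} Choosing $b_\alpha$ from \eqref{TG:thr}, Lemma~\ref{TG:fa} gives $\FAR{\td{\tau}_G(b_\alpha)}\le\alpha(1+o(1))$, so $\td{\tau}_G(b_\alpha)\in\mathcal{C}_{\alpha(1+o(1))}$; since $\gfun_\theta^{-1}(\abs{\log(\alpha(1+o(1)))})\sim\gfun_\theta^{-1}(\abs{\log\alpha})$, this neither weakens the lower bound nor changes the upper bound. Chaining the lower bound, the trivial inequality $\SADDth{\tau}\le\WADDth{\tau}$, and the upper bound then sandwiches both $\SADDth{\td{\tau}_G(b_\alpha)}$ and $\WADDth{\td{\tau}_G(b_\alpha)}$ against $\gfun_\theta^{-1}(\abs{\log\alpha})(1+o(1))$, which is \eqref{FOAPPRGCUSUM}, and simultaneously shows that $\td{\tau}_G(b_\alpha)$ attains $\inf_{\tau\in\Ccal}\WADDth{\tau}$ and $\inf_{\tau\in\Ccal}\SADDth{\tau}$ to first order, i.e.\ \eqref{AsymptProblems}. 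Uniformity over $\theta\in\Theta$ is automatic, since every step above holds for each fixed $\theta$ with the $o(1)$ terms governed by the $\theta$-free choices of $b_\alpha$, $m_\alpha$, and $\Theta_\alpha$.

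\emph{Main obstacle.} The real work is not in Theorem~\ref{TG:asymp_opt}, which is essentially assembly of earlier results, but in Lemma~\ref{TG:fa}: bounding the FAR of the \emph{composite} GLR statistic over a continuum of parameters is what forces the polynomial inflation $b_\alpha^{\eps d/2}$ in \eqref{FARGLR} together with Assumption~\ref{TG:smooth} --- a uniform bound on $\emax{-\nabla_\theta^2\sum_{i=k}^{n} Z_{i,k}^{\theta}}$ over the window and a shrinking ball around $\hat{\theta}_{n,k}$ --- and the proof there is in effect a covering/Laplace-type estimate over $\Theta_\alpha$. The only delicate point inside Theorem~\ref{TG:asymp_opt} itself is a mild regularity requirement on $\gfun_\theta^{-1}$ under which replacing $b_\alpha$ by $\abs{\log\alpha}$, letting $\delta\to0$, and allowing $\alpha(1+o(1))$ in place of $\alpha$ each cost only a $(1+o(1))$ factor; this is satisfied whenever $\gfun_\theta^{-1}$ is slowly varying (in particular logarithmic, as in all the examples here), and one should accordingly fix the window via a common $\sup_{\theta\in\Theta}\gfun_\theta^{-1}$ so that Theorem~\ref{TC:delay} can be invoked simultaneously for every $\theta\in\Theta$.
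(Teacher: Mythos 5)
Your proposal is correct and follows essentially the same route as the paper: dominate $\td{\tau}_G(b)$ by the WL-CuSum tuned to the true $\theta$, invoke the known-parameter delay asymptotics (Theorem~\ref{TC:delay}/\ref{TC:asymp_opt}) for the upper bound, compare with the lower bound of Theorem~\ref{llr:delay_lower_bound} applied with $\gfun_\theta^{-1}$, and control the FAR via Lemma~\ref{TG:fa} with the threshold from \eqref{TG:thr}. Your extra remarks on the $\alpha(1+o(1))$ relaxation and on choosing the window via $\sup_{\theta\in\Theta}\gfun_\theta^{-1}$ only make explicit details the paper leaves implicit.
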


\begin{proof}
Evidently, for any $\theta \in \Theta$ and any threshold $b>0$,
\begin{equation*}
    \WADDth{\td{\tau}_G(b)} \leq \WADDth{\td{\tau}_C(b)}, \quad  \SADDth{\td{\tau}_G(b)} \leq \SADDth{\td{\tau}_C(b)}.
\end{equation*}
Let $b=b_\alpha$ be so selected that $\FAR{\td{\tau}_G(b_\alpha)} \leq \alpha$ and $b_\alpha \sim |\log \alpha|$ as $\alpha \to 0$. Then it follows from the asymptotic approximations \eqref{FOAPPRCUSUM} in Theorem~\ref{TC:asymp_opt} that, as $\alpha \to 0$, 
\[
 \SADDth{\td{\tau}_G(b_\alpha)} \le \WADDth{\td{\tau}_G(b_\alpha)} \le \gfun_\theta^{-1} (|\log \alpha|) (1+o(1)) .
\]
Comparing these asymptotic inequalities with the asymptotic lower bound \eqref{LowerSADD} in
Theorem~\ref{llr:delay_lower_bound}, immediately yields \eqref{FOAPPRGCUSUM}, which is asymptotically the best one can do to first order according to
Theorem~\ref{llr:delay_lower_bound}. 

In particular, if $b_\alpha$ is found from equation \eqref{TG:thr}, then $b_\alpha \sim |\log \alpha|$ as $\alpha\to0$ and by 
Lemma~\ref{TG:fa} $\FAR{\td{\tau}_G(b_\alpha)}\leq \alpha (1+o(1))$, and therefore the assertions hold. 
\end{proof}

\section{Extensions to Pointwise Optimality and Dependent Non-homogeneous Models}
\label{sec:ext}

The measure of FAR that we have used in this paper (see \eqref{eq:FAR_def}) is the inverse of the MTFA. However, the MTFA is a good measure of the FAR if, and only if, the pre-change distributions of the WL-CuSum stopping time $\tilde{\tau}_C(b)$ and the WL-GLR-CuSum stopping time $\tilde{\tau}_G(b)$ are approximately geometric. 
While this geometric property can be established for i.i.d.\ data models (see, e.g., Pollak and Tartakovsky~\cite{PollakTartakovskyTPA09} and Yakir~\cite{Yakir-AS95}), it is not neccessarily true for non-homogeneous and dependent data, as discussed in Mei~\cite{Mei-SQA08} and Tartakovsky~\cite{Tartakovsky-SQA08a}. 
Therefore, in general, the MTFA is not appropriate for measuring the FAR. In fact, large values of MTFA may not necessarily guarantee small values of the probability of false alarm as discussed in detail in \cite{Tartakovsky-SQA08a,tartakovsky_sequential}. When the post-change model is Gaussian non-stationary as defined in Example~\ref{gex}, the MTFA 
may still be an appropriate measure for false alarm rate, as shown in the simulation study in Section~\ref{num-res:mtfa}. Based on this result we conjecture that the MTFA-based FAR constraint may be suitable for other independent and non-stationary data models as well. However, in general, this may not be the case, and 
a more appropriate measure of the FAR in the general case may be the maximal (local) conditional probability of false alarm in the time interval $(k, k+m]$ defined as \cite{tartakovsky_sequential}:
\[
    \SPFA_m(\tau) = \sup_{k \ge 0} \Prob{\infty}{\tau \le k+m | \tau > k}.
\]
Then the constraint set in \eqref{fa_constraint} can be replaced by set
$
\Cbb_{\beta,m} = \{\tau: \SPFA_m(\tau) \leq \beta\}
$
of procedures for which the SPFA does not exceed a prespecified value $\beta \in (0,1)$. 

Pergamenschtchikov and Tartakovsky~\cite{PergTarSISP2016,PerTar-JMVA2019} considered general stochastic models of dependent and 
nonidentically distributed observations
but asymptotically homogeneous (i.e., $\gfun(n)=n$). They proved not only minimax optimality but also asymptotic pointwise optimality as $\beta\to0$ 
(i.e., for all change points $\nu \ge 1$) of the Shiryaev-Roberts (SR)
procedure for the simple post-change hypothesis, and the mixture SR for the composite post-change hypothesis in class $\Cbb_{\beta,m}$, when 
$m=m_\beta$ depends on $\beta$ and goes to infinity as $\beta\to0$ at such a rate that $\log m_\beta=o(|\log \beta|)$. 

The results of 
\cite{PergTarSISP2016,PerTar-JMVA2019} can be readily extended to the asymptotically non-homogeneous case where the function $\gfun (n)$ increases with 
$n$ faster than $\log n$. In particular, using the developed in \cite{PergTarSISP2016,PerTar-JMVA2019} techniques based on embedding class $\Cbb_{\beta,m}$
in the Bayesian class with a geometric prior distribution for the change point and the upper-bounded weighted PFA,
 it can be shown that the WL-CuSum 
procedure \eqref{TC:test} with $m_\alpha$ replaced by $m_\beta$ is first-order pointwise asymptotically optimal in class 
$\Cbb_{\beta,m_\beta}=\Cbb_\beta$ as long as the uniform complete version 
of the strong law of large numbers for the log-likelihood ratio holds, i.e., for all $\delta > 0$
\[
    \sum_{n=1}^\infty \sup_{\nu \ge 1} \Prob{\nu}{\left | \frac{1}{\gfun_\nu(n)} \sum_{i=\nu}^{\nu+n-1} Z_{i, \nu} -1 \right | > \delta} < \infty,
\]
where in the general non-i.i.d. case
the partial LLR $Z_{i, \nu} $ is
\[
    Z_{i, \nu} = \log \frac{p_{1, i,\nu} (X_i| X_1,\dots, X_{i-1})}{p_{0, i}(X_i| X_1,\dots, X_{i-1})} .
\]
Specifically, it can be established that for all fixed $\nu \ge 1$, as $\beta\to 0$,
\begin{align*}
    \inf_{\tau \in \Cbb_\beta} \ADD_\nu(\tau) \sim \ADD_\nu(\tilde{\tau}_C(b_\alpha)) \sim g^{-1}(|\log \beta|),  
\end{align*}
where we used the notation $\ADD_\nu(\tau)=\E{\nu}{\tau-\nu | \tau \geq \nu}$ for the conditional average delay to detection. Similar results also hold
for the maximal average detection delays $\WADD{\tau}$ and $\SADD{\tau}= \sup_{\nu \ge 1} \ADD_\nu(\tau)$.

It is worth noting that it follows from the proof of Theorem~\ref{llr:delay_lower_bound} that under condition \eqref{llr:upper} the following asymptotic lower bound holds for the average detection delay
$\ADD_\nu(\tau)$ uniformly for all values of the change point in class $\Cbb_\beta$:
\[
    \inf_{\tau\in\Cbb_\beta} \ADD_\nu(\tau) \ge g^{-1}(|\log \beta|) (1+o(1)), \quad \forall \nu \ge 1 ~ \text{as}~ \beta \to 0.
\]

In the case where the post-change observations have parametric uncertainty, sufficient conditions for the optimality of the WL-GLR-CuSum procedure are more sophisticated -- a probability in the vicinity of the true post-change parameter should be involved \cite{PerTar-JMVA2019}. 

Further details and the proofs are omitted and will be given elsewhere.

\section{Numerical Results}
\label{sec:num-res}

\subsection{Performance Analysis for GEM problem}
\label{num-res:pa}

\begin{figure}[tbp]
\centerline{\includegraphics[width=.74\textwidth,height=9cm]{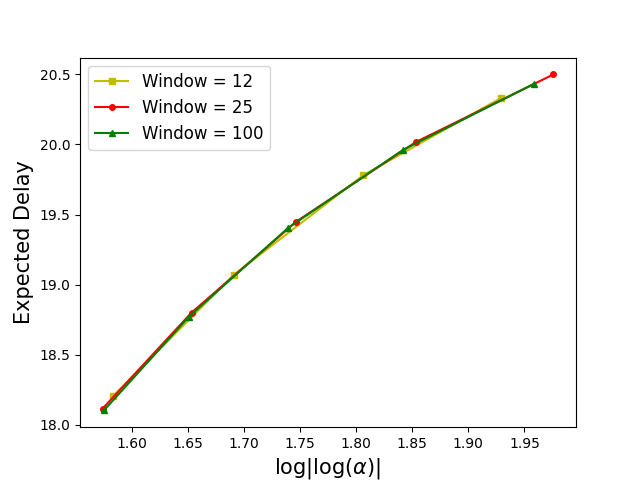}}
\vspace{-3mm}\caption{Performances of the WL-CuSum with different window-sizes for the Gaussian exponential mean-change detection problem  with $\mu_0=0.1$, $\sigma_0^2 = 10000$, and $\theta = 0.4$. The change-point is $\nu = 1$.}
\label{fig:perf}
\end{figure}

In Fig. \ref{fig:perf}, we study the performance of the proposed WL-CuSum procedure in \eqref{TC:test} through Monte Carlo (MC) simulations for the Gaussian exponential mean-change detection problem (see Example~\ref{gex}), with known post-change parameter $\theta$. The change-point is taken to be $\nu=1$\footnote{Note that $\nu=1$ may not necessarily be the worst-case value for the change-point for the WL-CuSum procedure. However, extensive experimentation with different values of $\nu$ ranging from 1 to 100, with window-sizes of 15 and 25, shows that in almost all cases $\nu=1$ results in the largest expected delay, or one that is within 1\% of the largest expected delay.}. Three  window-sizes are  considered,  with the window size of 12 being  smaller than the range expected delay values in the plot, and therefore not large enough to 
satisfy condition \eqref{TC:m_alpha}. The window size of 25 is sufficiently large, and the window size of 100 essentially corresponds to having no window at all. It is seen that the performance is nearly identical for all window sizes considered. We also observe that the expected delay is $O(\log(\abs{\log\alpha}))$, which matches our theoretical analysis in \eqref{gex:perf}.

\begin{figure}[tbp]
\centerline{\includegraphics[width=.75\textwidth,height=9cm]{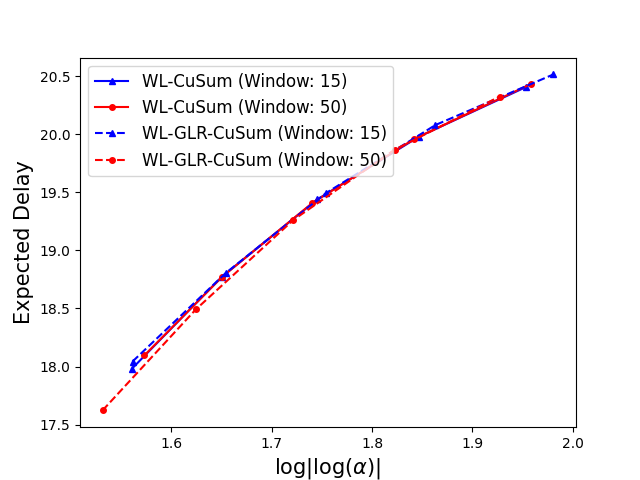}}
\vspace{-3mm}\caption{Comparison of operating characteristics of the WL-CuSum (solid lines) and WL-GLR-CuSum (dotted lines) procedures with different sizes of windows for the Gaussian exponential mean-change detection problem  with $\mu_0=0.1$, $\sigma_0^2 = 10000$, and $\theta = 0.4$. The post-change parameter set is $\Theta = (0,0.5)$, \vvv{which is further discretized into a grid with 50 equally spaced points for computing the GLR statistic.} The change-point $\nu = 1$. Procedures with sufficiently large (in red, circle) and insufficiently large window-sizes (in blue, triangle) are also compared. 
}
\label{fig:perf_glr}
\end{figure}

In Fig. \ref{fig:perf_glr}, we compare, also through MC simulations for the problem of Example~\ref{gex}, the performance of the WL-CuSum procedure \eqref{TC:test} tuned to the true post-change parameter and the WL-GLR-CuSum procedure \eqref{TG:test} where only the set of post-change parameter values is known. 
\vvv{It is seen that the operating characteristic of the WL-GLR-CuSum procedure is close to that of the WL-CuSum procedure for a sufficiently large window-size. We also observe that procedures with slightly insufficiently large window-sizes perform similarly to those with sufficiently large window sizes.}

\subsection{Performance Analysis for Gaussian Observations with Decaying Post-Change Mean}
\label{num-res:pa_dim}
In this subsection, we apply the WL-CuSum and WL-GLR-CuSum procedures for the QCD problem with  Gaussian observations, where the post-change mean gradually decays to the pre-change mean.
Specifically,
\begin{align}
\label{eq:gdm}
    X_n \sim {\cal N}(0,\sigma^2),~\forall n < \nu \nonumber\\
    X_n \sim {\cal N}(\mu_1 (n-\nu+1)^{-\theta},\sigma^2),~\forall n \geq \nu
\end{align}
for some decay parameter $\theta \in (0,1/2)$.
The growth function for this model is
\begin{equation*}
    g_{\nu}(n) = \sum_{i=\nu}^{\nu+n-1} \E{\nu}{Z_{i,\nu}} = \sum_{i=1}^{n} \frac{\mu_1^2}{2 \sigma^2} i^{-2\theta} = \frac{\mu_1^2}{2 \sigma^2 (1-2\theta)} n^{1-2 \theta} (1+o(1)),~\text{as}~n \to \infty
\end{equation*}
and thus
\begin{equation}
\label{grm:perf}
    g^{-1}(x) = (2 \sigma^2 \mu_1^{-2} (1-2\theta) x)^\frac{1}{1-2 \theta} (1+o(1)),~\text{as}~x \to \infty.
\end{equation}
Therefore, $\log g^{-1}(x) = O(\log x) = o(x)$ and condition \eqref{llr:grow_cond} is satisfied. Also note that since $(1-2\theta)^{-1} > 1$, the optimal WADD and CADD are asymptotically super-linear with $\abs{\log \alpha}$.

\begin{figure}[tbp]
\centerline{\includegraphics[width=.75\textwidth,height=9cm]{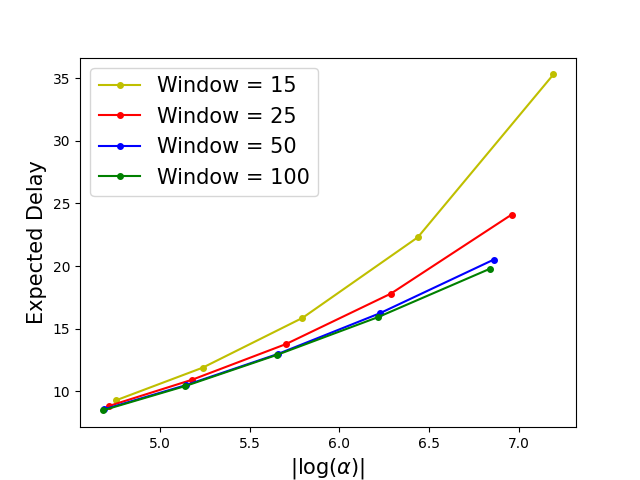}}
\vspace{-3mm}
\caption{Performances of the WL-CuSum with different window-sizes for the model in \eqref{eq:gdm} with $\mu_1=2$, $\sigma^2 = 4$, and $\theta = 0.2$. The change-point is $\nu = 1$.}
\label{fig:rec_cusum}
\end{figure}

In Fig.~\ref{fig:rec_cusum}, we study the performance of the proposed WL-CuSum procedure in \eqref{TC:test} through Monte Carlo (MC) simulations for the model in \eqref{eq:gdm}, with known decay parameter $\theta=0.35$. The change-point is taken to be $\nu=1$. Four window-sizes are considered, with the window sizes of 15 and 25 being smaller than the range of expected delay values in the plot, and therefore not large enough to 
satisfy condition \eqref{TC:m_alpha}.
It is seen that the performance improves significantly with an initial increase of window-size, with diminishing returns when the window-size become large enough.
We also observe that the expected delay is super-linear with $\abs{\log\alpha}$, which matches our theoretical analysis in \eqref{grm:perf}.

\begin{figure}[tbp]
\centerline{\includegraphics[width=.75\textwidth,height=9cm]{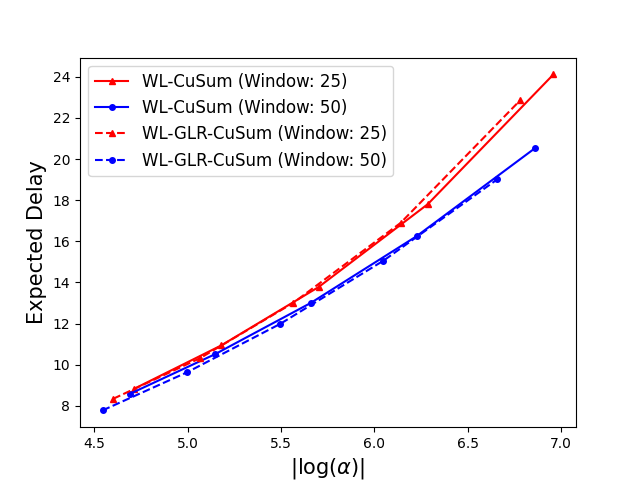}}
\vspace{-3mm}
\caption{Comparison of operating characteristics of the WL-CuSum (solid lines) and WL-GLR-CuSum (dotted lines) procedures with different sizes of windows for the observation model in  \eqref{eq:gdm} with $\mu_1=2$, $\sigma^2 = 4$, and $\theta = 0.2$. The post-change parameter set is $\Theta = (0.1,0.3)$, which is further discretized into a grid with 50 equally spaced points for computing the GLR statistic. The change-point $\nu = 1$.}
\label{fig:rec_glr}
\end{figure}

In Fig. \ref{fig:rec_glr}, we compare, also through MC simulations for the model in \eqref{eq:gdm}, the performance of the WL-CuSum procedure \eqref{TC:test} tuned to the true post-change parameter and the WL-GLR-CuSum procedure \eqref{TG:test} where only the \emph{set} of post-change parameter values is known.
We observe that the operating characteristic of WL-GLR-CuSum procedure is nearly identical to that of the WL-CuSum procedure for large enough window-size.

\subsection{Analysis of MTFA as False Alarm Measure}
\label{num-res:mtfa}

\begin{figure}[tbp]
\centerline{\includegraphics[width=.75\textwidth,height=10cm]{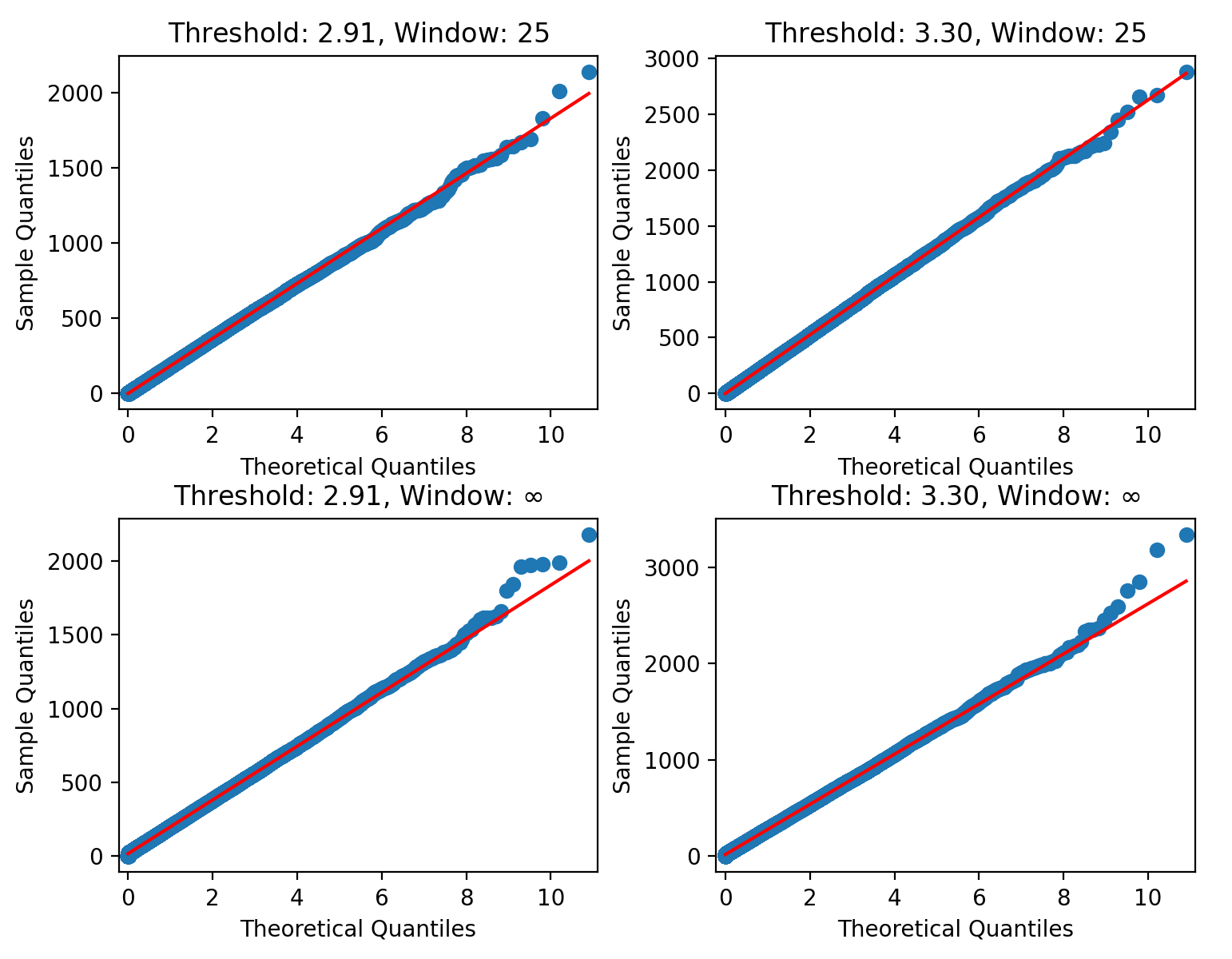}}
\vspace{-3mm}\caption{Quantile-Quantile (QQ) plots for full-history and window-limited CuSum stopping times with different thresholds for the Gaussian exponential mean-change detection problem with $\mu_0=0.1$, $\sigma_0^2 = 10000$, and $\theta = 0.4$. In all subplots, the x-axis shows the theoretical quantiles of the best-fit geometric distribution and the y-axis shows the experimental quantiles of distributions of the stopping times. The first row corresponds to WL-CuSum procedure  \eqref{TC:test} and the second row corresponds to the full-history CuSum procedure \eqref{TC:def}.}
\label{fig:qq}
\end{figure}

In Fig. \ref{fig:qq}, we study the distribution of the WL-CuSum stopping times using simulation results from the Gaussian exponential mean-change detection problem. This study is similar to the one in \cite{PollakTartakovskyTPA09}.
It is observed that the experimental quantiles of stopping times for the WL-CuSum procedure are close to the theoretical quantiles of a geometric distribution. This indicates that the distribution of the stopping time is approximately geometric, in which case MTFA is an appropriate false alarm performance measure, and our measure of FAR as the reciprocal of the MTFA is justified.

\subsection{Application: Monitoring COVID-19 Second Wave}
\label{num-res:covid}

\begin{figure}[htbp]
\centerline{\includegraphics[width=.75\textwidth,height=8cm]{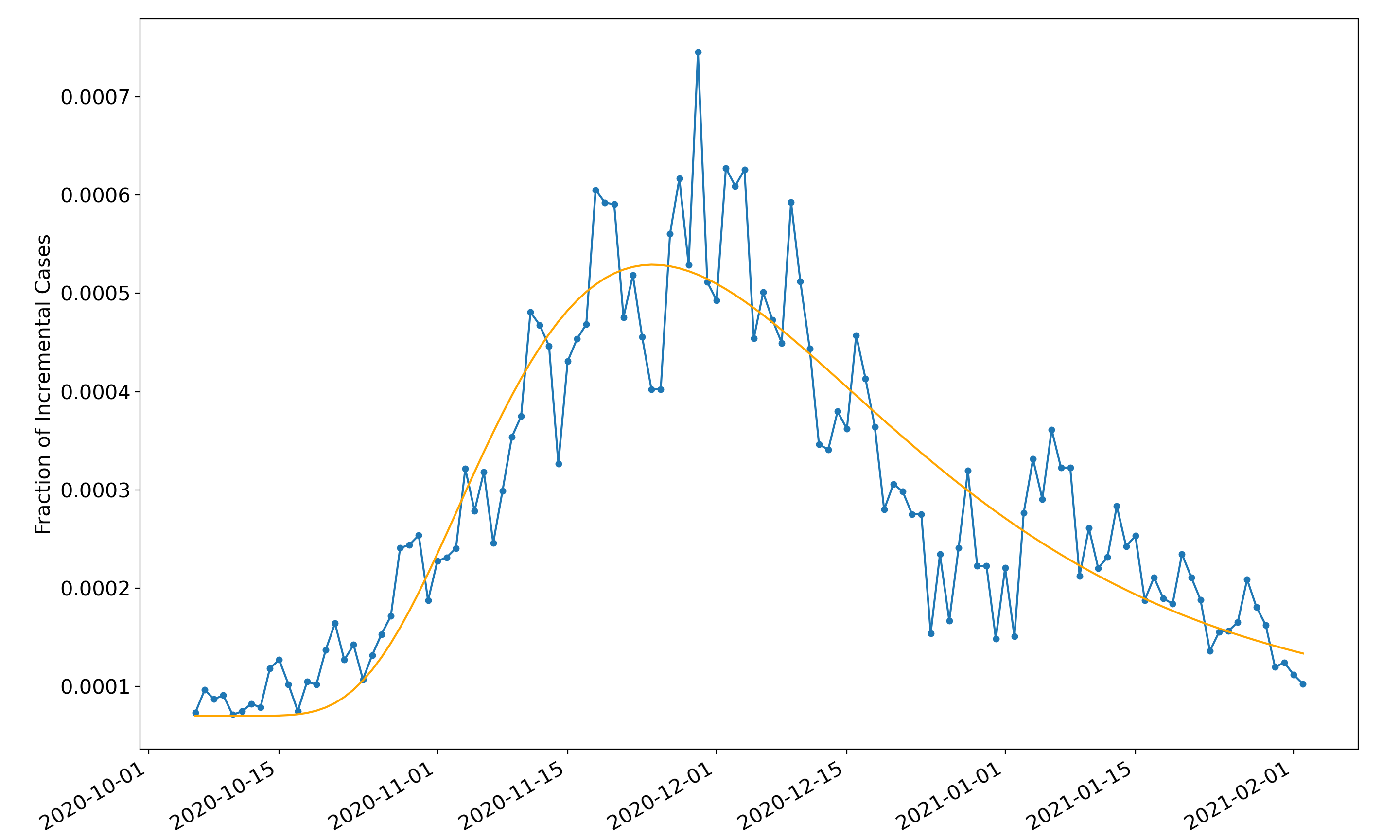}}
\vspace{-3mm}\caption{Validation of distribution model using past COVID-19 data.
The plot shows the four-day moving average of the daily new cases of COVID-19 as a fraction of the population in Wayne County, MI from October 1, 2020 to February 1, 2021 (in blue). The shape of the pre-change distribution $\mathcal{B}(a_0,b_0)$ is estimated using data from the previous 20 days (from September 11, 2020 to September 30, 2021), where $\hat{a}_0=20.6$ and $\hat{b}_0=2.94 \times 10^5$. The mean of the Beta distributions with the best-fit $h$ (defined in \eqref{num_sim:h}) is also shown (in orange), which minimizes the mean-square distance between the daily incremental fraction and mean of the Beta distributions. The best-fit parameters are: $\hat{\theta}_0=0.464$, $\hat{\theta}_1=3.894$, and $\hat{\theta}_2=0.445$.
}
\label{fig:cov_val}
\end{figure}

Next, we apply the developed WL-GLR-CuSum algorithm to monitoring the spread of COVID-19 using new case data from various counties in the US \cite{nyt-covid-data}. The goal is to detect the onset of a new wave of the pandemic based on the incremental daily cases. 
The problem is modeled as one of detecting a change in the mean of a Beta distribution as in \cite{covid_beta}. Let $\mathcal{B}(x;a,b)$ denote the density of the Beta distribution with shape parameters $a$ and $b$, i.e.,
\begin{equation*}
    \mathcal{B}(x;a,b) = \frac{x^{a-1}(1-x)^{b-1}\Gamma(a+b)}{\Gamma(a)\Gamma(b)}, \quad \forall x \in [0,1],
\end{equation*}
where $\Gamma$ represents the gamma function. Note that the mean of an observation under density $\mathcal{B}(x;a,b)$ is $a / (a+b)$.
Let
\begin{equation}
\label{num_sim:dist_model}
    p_0(x) = \mathcal{B}(x;a_0,b_0), \quad p_{1,n,k}^\theta (x) = \mathcal{B}(x;a_0 h_\theta(n-k),b_0), ~\forall n \geq k.
\end{equation}
Here, $h_\theta$ is a function such that $h_\theta(x) \geq 1,\forall x > 0$. Note that if $a_0 \ll b_0$ and $h_\theta(n-\nu)$ is not too large,
\begin{equation}
    \E{\nu}{X_n} = \frac{a_0 h_\theta(n-\nu)}{a_0 h_\theta(n-\nu) + b_0} \approx \frac{a_0}{b_0} h_\theta(n-\nu)
\end{equation}
for all $n \geq \nu$.
We design $h_\theta$ to capture the behavior of the average fraction of daily incremental cases. In particular, we model $h_\theta$ as
\begin{equation}
\label{num_sim:h}
    h_\theta(x) = 1+\frac{10^{\theta_0}}{\theta_2} \exp\left(-\frac{(x-\theta_1)^2}{2 \theta_2^2} \right)
\end{equation}
where $\theta_0,\theta_1,\theta_2 \geq 0$ are the model parameters and $\theta = (\theta_0,\theta_1,\theta_2) \in \Theta$.  When $n-\nu$ is small, $h_\theta(n-\nu)$ grows like the left tail of a Gaussian density, which matches the exponential growth in the average fraction of daily incremental cases seen at the beginning of a new wave of the pandemic. Also, as $n \to \infty$, $h_\theta(n-\nu) \to 0$, which corresponds to the daily incremental cases eventually vanishing at the end of the pandemic. In Fig. \ref{fig:cov_val}, we validate the choice of distribution model defined in \eqref{num_sim:dist_model} using data from COVID-19 wave of Fall 2020. 
In the simulation, $a_0$ and $b_0$ are estimated using observations from previous periods in which the increments remain low and roughly constant. It is observed that the mean of the daily fraction of incremental cases matches well with the mean of the fitted Beta distribution with $h_\theta$ in \eqref{num_sim:h}.

Note that the growth condition given in \eqref{TG:growth_cond} that is required for our asymptotic analysis is not satisfied for the observation model \eqref{num_sim:dist_model} with $h_\theta$ given in \eqref{num_sim:h}. Nevertheless, we expect the WL-GLR-CuSum procedure to perform as predicted by our analysis if the procedure stops during a time interval where $h_\theta$ is still increasing, which is what we would require of a useful procedure for detecting the onset of a new wave of the pandemic anyway.

\begin{figure}[htbp]
\centerline{\includegraphics[width=\textwidth,height=10cm]{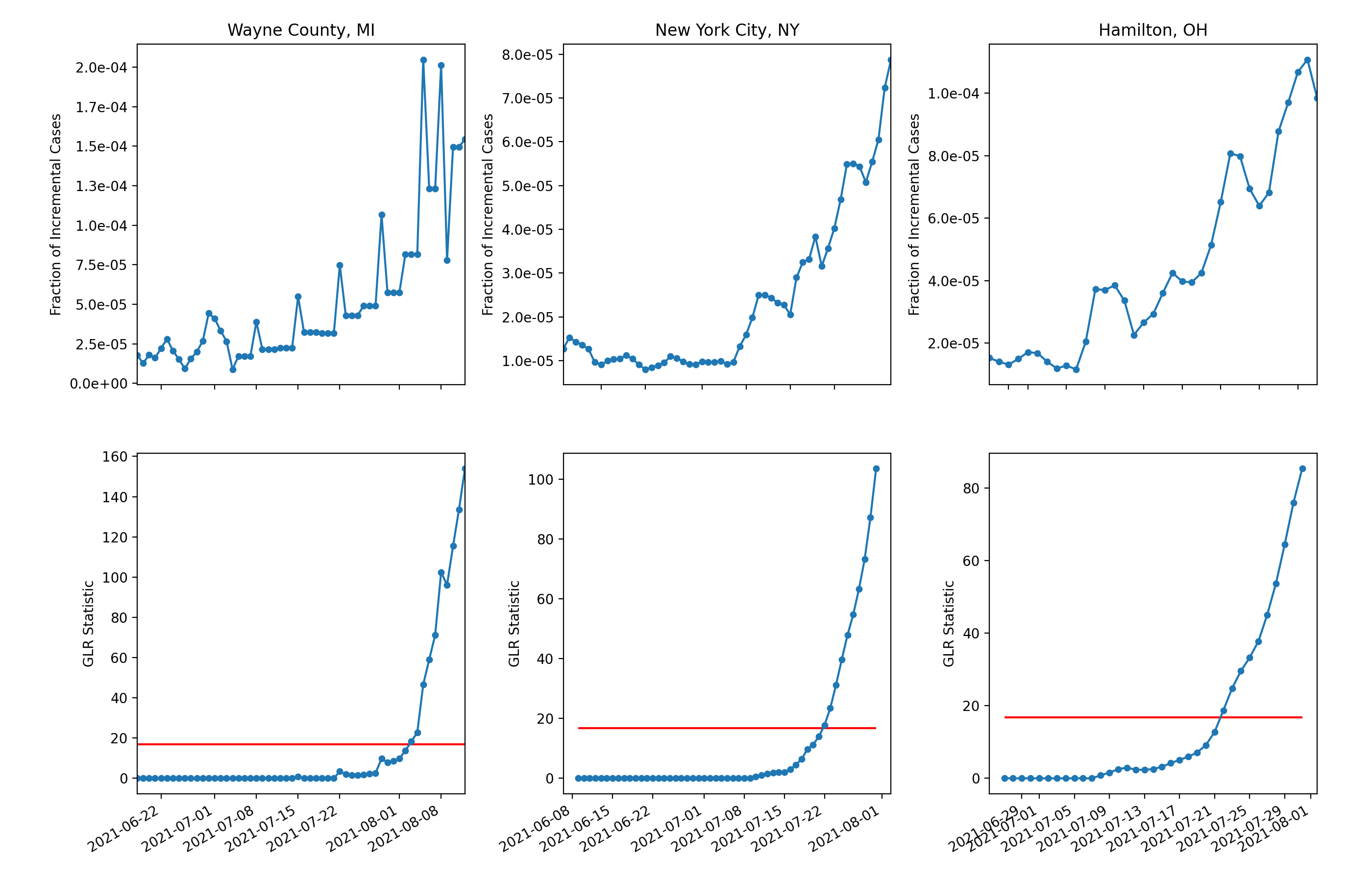}}
\vspace{-3mm}\caption{COVID-19 monitoring example. The upper row shows the four-day moving average of the daily new cases of COVID-19 as a fraction of the population in Wayne County, MI (left), New York City, NY (middle) and Hamilton County, OH (right). A pre-change $\mathcal{B}(a_0,b_0)$ distribution is estimated using data from the previous 20 days (from May 26, 2021 to June 14, 2021). 
The plots in the lower row show the evolution of the WL-GLR-CuSum statistic defined in \eqref{TG:test}. The FAR $\alpha$ is set to $0.001$ and the corresponding thresholds of the WL-CuSum GLR procedure are shown in red. The post-change distribution at time $n$ with hypothesized change point $k$ is modeled as $\mathcal{B}(a_0 h_\theta(n-k),b_0)$, where $h_\theta$ is defined in \eqref{num_sim:h}, and $\Theta = (0.1,5) \times (1,20) \times (0.1,5)$. The parameters $\theta_0$, $\theta_1$ and $\theta_2$ are assumed to be unknown. The window size $m_\alpha = 20$. The threshold is set using equation \eqref{TG:thr}.}
\label{fig:covid}
\end{figure}

In Fig. \ref{fig:covid}, we illustrate the use the WL-GLR-CuSum procedure with the distribution model \eqref{num_sim:dist_model} for the detection of the onset of a new wave of COVID-19. We assumed a start date of June 15th, 2021 for the monitoring, at which time the pandemic appeared to be in a steady state with incremental cases staying relatively flat. We observe that the WL-GLR-CuSum statistic significantly and persistently crosses the detection threshold around late July in all counties, which is strong indication of a new wave of the pandemic. More importantly, unlike the raw observations which are highly varying, the WL-GLR-CuSum statistic shows a clear dichotomy between the pre- and post-change settings, with the statistic staying near zero before the purported onset of the new wave, and taking off very rapidly (nearly vertically) after the onset. 

\section{Conclusion}
\label{sec:concl}

We considered the problem of the quickest detection of a change in the distribution of a sequence of independent observations, assuming that the pre-change observation are stationary with known distribution, while the post-change observations are non-stationary with possible parametric uncertainty. Specifically, we assumed that the cumulative KL divergence between the post-change and the pre-change distributions grows at least logarithmically after the change point. We derived a universal asymptotic lower bound on the worst-case expected detection delay under a constraint on the false alarm rate in this non-stationary setting, which had been previously derived only in the asymptotically stationary setting. We showed that the developed WL-CuSum procedure for known post-change distribution, as well as the developed WL-GLR-CuSum  procedure for the unknown post-change parameters, asymptotically achieve the lower bound on the worst-case expected detection delay, as the false alarm rate goes to zero. We validated these theoretical results through numerical Monte-Carlo simulations. We also demonstrated that the proposed WL-GLR-CuSum procedure can be effectively used in monitoring pandemics. We provided in Section~\ref{sec:ext} some possible avenues for future research, in particular, those allowing for dependent observations and more general false alarm constraints. 

\section{Acknowledgment}
\label{sec:ack}

The authors would like to thank Don Towsley for suggesting the example in \eqref{eq:gdm}.

\appendix

\begin{proof}[\textbf{Proof of Lemma~\ref{llr:lemma}}]
For the first inequality, fix $\nu \geq 1$ and $\delta > 0$. Note that
\[ \gfun_\nu (n) = \sum_{i=\nu}^{\nu+n-1} \E{\nu}{Z_{i,\nu}}. \]
Since $\E{\nu}{Z_{i,\nu}} > 0$ for all $i \geq \nu$, for any $t \leq n$,
\begin{equation*}
    \gfun_\nu(n) = \sum_{i=\nu}^{\nu+n-1} \E{\nu}{Z_{i,\nu}} \geq \gfun_\nu(t)
\end{equation*}
and, by definition,
\begin{equation*}
    \E{\nu}{\sum_{i=\nu}^{\nu+t-1} Z_{i,\nu} - \gfun_\nu(t)} = 0.
\end{equation*}
Thus, for an arbitrary $\nu > 1$ we have
\begin{align}
    \Prob{\nu}{\max_{t \leq n} \sum_{i=\nu}^{\nu+t-1} Z_{i,\nu} \geq (1+\delta) \gfun_\nu(n)} &\leq \Prob{\nu}{\max_{t \leq n} \sum_{i=\nu}^{\nu+t-1} Z_{i,\nu} - \gfun_\nu(t) \geq \delta \gfun_\nu(n)} \nonumber\\
    &\leq \Prob{\nu}{\max_{t \leq n} \abs{\sum_{i=\nu}^{\nu+t-1} Z_{i,\nu} - \gfun_\nu(t)} \geq \delta \gfun_\nu(n)} \nonumber\\
    &\stackrel{(*)}{\leq} \frac{1}{\delta^2 \gfun_\nu^2(n)} \Var{\nu}{\sum_{i=\nu}^{\nu+n-1} Z_{i,\nu} - \gfun_\nu(n)} \nonumber\\
    &= \frac{1}{\delta^2 \gfun_\nu^2(n)} \sum_{i=\nu}^{\nu+n-1} \Var{\nu}{Z_{i,\nu}} 
\end{align}
where $(*)$ follows from Kolmogorov's inequality and the last line follows by independence.
Hence, 
\begin{align*}
    \sup_{\nu \ge 1} \Prob{\nu}{\max_{t \leq n} \sum_{i=\nu}^{\nu+t-1} Z_{i,\nu} \geq (1+\delta) \gfun_\nu(n)} \leq
     \frac{1}{\delta^2} \sup_{\nu \ge 1} \frac{1}{\gfun_\nu^2(n)} \sum_{i=\nu}^{\nu+n-1} \Var{\nu}{Z_{i,\nu}} \xrightarrow{n \to \infty} 0
\end{align*}
where the limit follows from condition \eqref{llr:var}.

For the second inequality, fix $\nu$ and $t$ such that $t \geq \nu \geq 1$.
For any $\delta \in (0,1)$, we have
\begin{align}
    &\quad \Prob{\nu}{\sum_{i=t}^{t+n-1} Z_{i,t} \leq (1-\delta) \gfun_\nu(n)} \nonumber\\
    &= \Prob{\nu}{\sum_{i=t}^{t+n-1} (Z_{i,t} - \E{\nu}{Z_{i,t}}) \leq -\delta \gfun_\nu(n) + \underbrace{\sum_{i=\nu}^{\nu+n-1} \E{\nu}{Z_{i,\nu}} - \sum_{i=t}^{t+n-1} \E{\nu}{Z_{i,t}}}_{\leq 0 \text{ by \eqref{llr:tshift}}} } \nonumber\\
    &\leq \Prob{\nu}{\sum_{i=t}^{t+n-1} (Z_{i,t} - \E{\nu}{Z_{i,t}}) \leq -\delta \gfun_\nu(n)} \nonumber\\
    &\leq \Prob{\nu}{\abs{\sum_{i=t}^{t+n-1} (Z_{i,t} - \E{\nu}{Z_{i,t}})} \geq \delta \gfun_\nu(n) } \nonumber\\
    &\stackrel{(**)}{\leq} \frac{1}{\delta^2 \gfun_\nu^2(n)} \sum_{i=t}^{t+n-1} \Var{\nu}{Z_{i,t}}
\end{align}
where $(**)$ follows from Chebyshev's inequality.
Thus,  by condition \eqref{llr:var},
\begin{equation*}
    \sup_{t \geq \nu \geq 1} \Prob{\nu}{\sum_{i=t}^{t+n-1} Z_{i,t} \leq (1-\delta) \gfun_\nu(n)} \le  \frac{1}{\delta^2} 
    \sup_{t \ge \nu \ge 1} \frac{1}{\gfun_\nu^2(n)} \sum_{i=t}^{t+n-1} \Var{\nu}{Z_{i,t}}  \xrightarrow{n \to \infty} 0.
\end{equation*}
The proof is complete. \qedhere
\end{proof}

\begin{proof}[\textbf{Proof of Lemma~\ref{TG:fa}}]

Let $\lambda_{n,k}^\theta = \sum_{i=k}^n Z_{i,k}^\theta$ denote the log-likelihood ratio between the hypotheses that $\nu=k$ with the parameter $\theta$ against $\nu=\infty$ in the sample $(X_1,\dots,X_n)$. We re-write the definition of $\td{\tau}_G(b)$ in \eqref{TG:test} as:
\begin{equation} \label{GRLCUSUMnew}
    \td{\tau}_G\left(b\right) := \inf \left\{n:\max_{n-m_b \leq k \leq n} \lambda_{n,k}^{\hat{\theta}_{n,k}} \geq b \right\}
\end{equation}
where $\hat{\theta}_{n,k} \in \Theta_b$ solves
\begin{equation}
\label{TGFA:sup_theta}
    \sup_{\theta \in \Theta_b} \sum_{i=k}^n Z_{i,k}^{\theta}
\end{equation}
for a given pair $(k,n)$ where $k \leq n$. Note that now instead of $\Theta_\alpha$ we use the notation $\Theta_b$, which is a compact subset of $\Theta$.

Let $\Pi$ be a probability measure on $\Theta_b$. Recall that ${\cal F}_n =\sigma(X_1,\dots,X_n)$ 
and ${\cal F}_\infty =\sigma(X_1,X_1,\dots)$. Given this mixing distribution over $\Theta_b$, define $Q_k$ by
\[
Q_k(A) = \int_{\Theta_b} \Pb_{k,\theta} (A) \Pi(\D \theta), \quad A \in {\cal F}_\infty.
\]
Then $Q_k$ is easily seen to be a probability measure. Moreover, letting $\Pb_{\infty}^n$ and $Q_k^n$ denote the restrictions of  $\Pb_{\infty}$ and $Q_k$
to the sigma-algebra ${\cal F}_n$ introduce the likelihood ratio
\[
    L_{n,k} := \frac{\D Q_k^n}{\D \Pb_{\infty}^n} = \int_{\Theta_b} \exp(\lambda_{n,k}^\theta) \Pi(\D\theta), \quad n \ge k.
\]
Further, let
\[
    R_n := \sum_{k=1}^n L_{n,k}.
\]
Obviously, $\E{\infty}{L_{n,k}|{\cal F}_{n-1}} = L_{n-1,k}$ and $\E{\infty}{L_{n,k}}=1$, so $\{L_{n,k}\}_{n \ge k}$ is a 
$(\mathbb{P}_\infty,{\cal F}_n)$-martingale with unit expectation. Hence, $\E{\infty}{R_n | {\cal F}_{n-1}} = 1 + R_{n-1}$ and the statistic $\{R_n-n\}_{n \ge 1}$ 
is a zero-mean $(\mathbb{P}_\infty,{\cal F}_n)$-martingale. By the optional sampling theorem (see, e.g., \cite[Th 2.3.1, page 31]{tartakovsky_sequential}),
for any proper stopping time $\tau$, $\E{\infty}{R_\tau} = \E{\infty}{\tau}$, and in particular, $\E{\infty}{R_{\td{\tau}_G(b)}} = \E{\infty}{\td{\tau}_G(b)}$.   
 
Now, set $\Pi(\D\theta) = |\Theta_b|^{-1} \D \theta$ (uniform on $\Theta_b$). At the next step we show that as $b \to \infty$
\begin{equation}\label{RtoV}
    \E{\infty}{R_{\td{\tau}_G(b)}} \ge |\Theta_b|^{-1} e^{-1} \frac{\pi^{d/2}}{\Gamma(1+d/2)} b^{- \frac{\eps d}{2}} e^b (1+o(1)),
\end{equation}
which along with the previous argument implies that
\begin{align*}
\E{\infty}{\td{\tau}_G(b)}
& \ge |\Theta_b|^{-1} e^{-1} \frac{\pi^{d/2}}{\Gamma(1+d/2)} b^{- \frac{\eps d}{2}} e^b (1+o(1)).
\end{align*}
This inequality implies inequality \eqref{FARGLR}. Thus, it remains to prove the asymptotic inequality \eqref{RtoV}.

By assumption, $\hat{\theta}_{n,k}$ lies in the interior of $\Theta_b$ (for sufficiently large $b$). 
Using Taylor's expansion, for any $k \leq n$ and $\theta \in \Theta$,
\begin{align}
\label{TGFA:Taylor}
    \lambda_{n,k}^{\theta} &= \lambda_{n,k}^{\hat{\theta}_{n,k}} + \left. \nabla_\theta \lambda_{n,k}^{\theta} \right|_{\theta = \hat{\theta}_{n,k}} (\theta-\hat{\theta}_{n,k}) + \frac{1}{2} (\theta-\hat{\theta}_{n,k})^\T \left. \nabla_\theta^2 \lambda_{n,k}^{\theta} \right|_{\theta = \theta^*} (\theta-\hat{\theta}_{n,k}) \nonumber\\
    &= \lambda_{n,k}^{\hat{\theta}_{n,k}} + \frac{1}{2} (\theta-\hat{\theta}_{n,k})^\T \left. \nabla_\theta^2 \lambda_{n,k}^{\theta} \right|_{\theta = \theta_{n,k}^*} (\theta-\hat{\theta}_{n,k})
\end{align}
where $\theta_{n,k}^*$ is an intermediate point between $\theta$ and $\hat{\theta}_{n,k}$, i.e.,  $\theta_{n,k}^* = \rho \theta + (1-\rho) \hat{\theta}_{n,k}$ 
for some $\rho \in (0,1)$. The last equality follows from \eqref{TGFA:sup_theta}. This further implies that
\begin{equation}
    \lambda_{n,k}^{\theta} - \lambda_{n,k}^{\hat{\theta}_{n,k}} \geq - \frac{1}{2} \emax{-\nabla_\theta^2 \lambda_{n,k}^{\theta_{n,k}^*}} \norm{\theta-\hat{\theta}_{n,k}}^2 \geq - \frac{1}{2} \Lambda_{n,k} \norm{\theta-\hat{\theta}_{n,k}}^2,
\end{equation}
where
\[
    \Lambda_{n,k} := \sup_{\theta: \norm{\theta-\hat{\theta}_{n,k}} < b^{-\frac{\eps}{2}}} \emax{- \nabla_\theta^2 \sum_{i=k}^n Z_{i,k}^{\theta}}.
\]

Fix $\eps > 0$ such that Assumption~\ref{TG:smooth} is satisfied. Write $S(b) := \left\{\theta: \norm{\theta-\hat{\theta}_{n,k}} < b^{-\frac{\eps}{2}}\right\}$.
Since $S(b) \searrow \varnothing$ while $\Theta_b \nearrow \Theta$  as $b \nearrow \infty$, it follows that $S(b) \subset \Theta_b$ for all sufficiently large $b$.

Denote $b_0 := \inf\{b>0:S(b) \subseteq \Theta_b \}$. For any $b > b_0$, we have
\begin{align}
   L_{n,k}e^{-\lambda_{n,k}^{\hat{\theta}_{n,k}}} &= |\Theta_b|^{-1} \int_{\Theta_b} \exp\left(\lambda_{n,k}^{\theta} - \lambda_{n,k}^{\hat{\theta}_{n,k}}\right) \D \theta\nonumber\\
    &\geq |\Theta_b|^{-1} \int_{S(b)} \exp\left(\lambda_{n,k}^{\theta} - \lambda_{n,k}^{\hat{\theta}_{n,k}}\right) \D \theta \nonumber\\
    &\geq |\Theta_b|^{-1} \int_{S(b)} \exp\left(-\frac{1}{2} \emax{-\nabla_\theta^2 \lambda_{n,k}^{\theta_{n,k}^*}} \norm{\theta-\hat{\theta}_{n,k}}^2\right) \D \theta \nonumber\\
    &\geq |\Theta_b|^{-1} \int_{S(b)} \exp\left(-\frac{1}{2} \Lambda_{n,k} \norm{\theta-\hat{\theta}_{n,k}}^2\right) \D \theta \nonumber\\
    &\geq |\Theta_b|^{-1} \exp\left(-\frac{1}{2} \Lambda_{n,k} b^{-\eps}\right) C_d b^{- \frac{\eps d}{2}} ,
\end{align}
where $C_d := \frac{\pi^{d/2}}{\Gamma(1+d/2)}$.
The last inequality follows because the volume of a $d$-dimensional ball with radius $b^{-\frac{\eps}{2}}$ is given by 
$\frac{\pi^{d/2}}{\Gamma(1+d/2)} b^{- \frac{\eps d}{2}} =: C_d b^{- \frac{\eps d}{2}}$, where $\Gamma(\cdot)$ is the gamma function. Therefore,
\[
    L_{n,k} \ge \exp\left(\lambda_{n,k}^{\hat{\theta}_{n,k}}\right) |\Theta_b|^{-1} \exp\left(-\frac{1}{2} \Lambda_{n,k} b^{-\eps}\right) C_d b^{- \frac{\eps d}{2}}.
\]
Write
\[
    V_n = \max_{n-m_b \le k \le n} \exp\left(\lambda_{n,k}^{\hat{\theta}_{n,k}}\right)
\]
and note that $V_{\td{\tau}_G(b)}\ge e^b$ on $\{\td{\tau}_G(b) < \infty\}$ (by the definition of the stopping time $\td{\tau}_G(b)$
in \eqref{GRLCUSUMnew}).
It follows that
\begin{align}
    \E{\infty}{R_{\td{\tau}_G(b)}} &= \E{\infty}{\sum_{k=1}^{\td{\tau}_G(b)} L_{\td{\tau}_G(b),k}} \nonumber
    \\
    &\ge \E{\infty}{\max_{1 \le k \le \td{\tau}_G(b)} \exp \left(\lambda_{\td{\tau}_G(b),k}^{\hat{\theta}_{\td{\tau}_G(b),k}}\right) \exp\left(-\frac{1}{2} \Lambda_{\td{\tau}_G(b),k} b^{-\eps}\right)} |\Theta_b|^{-1} C_d b^{- \frac{\eps d}{2}} \nonumber
    \\
    &\ge \E{\infty}{V_{\td{\tau}_G(b)} \min_{\td{\tau}_G(b) - m_b \le k \le \td{\tau}_G(b)} \exp\left(-\frac{1}{2} \Lambda_{\td{\tau}_G(b),k} b^{-\eps}\right)} |\Theta_b|^{-1} C_d b^{- \frac{\eps d}{2}} \nonumber
    \\
    &\ge \E{\infty}{\exp\left(-\frac{b^{-\eps}}{2} \max_{\td{\tau}_G(b) - m_b \le k \le \td{\tau}_G(b)} \Lambda_{\td{\tau}_G(b),k}\right)} e^b |\Theta_b|^{-1} C_d b^{- \frac{\eps d}{2}}.
    \label{IneqER}
\end{align}
By Assumption~\ref{TG:smooth}, 
\[
   \lim_{b\to\infty}  \mathbb{P}_\infty \left\{\max_{n - m_b \le k \le n} \Lambda_{n,k} \leq 2 b^\eps \right\} =1,
\]
and therefore, as $b\to\infty$,
\begin{align*}
   \E{\infty}{\exp\left(-\frac{b^{-\eps}}{2} \max_{\td{\tau}_G(b) - m_b \le k \le \td{\tau}_G(b)} \Lambda_{\td{\tau}_G(b),k}\right)} =e^{-1} +o(1)
\end{align*}
which along with inequality \eqref{IneqER} yields inequality \eqref{RtoV}.
\end{proof}

\bibliographystyle{IEEEtran}
\bibliography{ref}

\begin{thebibliography}{10}
\providecommand{\url}[1]{#1}
\csname url@samestyle\endcsname
\providecommand{\newblock}{\relax}
\providecommand{\bibinfo}[2]{#2}
\providecommand{\BIBentrySTDinterwordspacing}{\spaceskip=0pt\relax}
\providecommand{\BIBentryALTinterwordstretchfactor}{4}
\providecommand{\BIBentryALTinterwordspacing}{\spaceskip=\fontdimen2\font plus
\BIBentryALTinterwordstretchfactor\fontdimen3\font minus
  \fontdimen4\font\relax}
\providecommand{\BIBforeignlanguage}[2]{{%
\expandafter\ifx\csname l@#1\endcsname\relax
\typeout{** WARNING: IEEEtran.bst: No hyphenation pattern has been}%
\typeout{** loaded for the language `#1'. Using the pattern for}%
\typeout{** the default language instead.}%
\else
\language=\csname l@#1\endcsname
\fi
#2}}
\providecommand{\BIBdecl}{\relax}
\BIBdecl

\bibitem{tartakovsky_sequential}
A.~G. Tartakovsky, I.~Nikiforov, and M.~Basseville, \emph{Sequential Analysis:
  Hypothesis Testing and Changepoint Detection}, ser. Monographs on Statistics
  and Applied Probability 136.\hskip 1em plus 0.5em minus 0.4em\relax Boca
  Raton, London, New York: Chapman \& Hall/CRC Press, Taylor \& Francis Group,
  2015.

\bibitem{tartakovsky_qcd2020}
A.~G. Tartakovsky, \emph{Sequential Change Detection and Hypothesis Testing:
  General Non-i.i.d. Stochastic Models and Asymptotically Optimal Rules}, ser.
  Monographs on Statistics and Applied Probability 165.\hskip 1em plus 0.5em
  minus 0.4em\relax Boca Raton, London, New York: Chapman \& Hall/CRC Press,
  Taylor \& Francis Group, 2020.

\bibitem{vvv_qcd_overview}
V.~V. Veeravalli and T.~Banerjee, ``Quickest change detection,'' in
  \emph{Academic press library in signal processing: Array and statistical
  signal processing}.\hskip 1em plus 0.5em minus 0.4em\relax Cambridge, MA:
  Academic Press, 2013.

\bibitem{xie_vvv_qcd_overview}
L.~Xie, S.~Zou, Y.~Xie, and V.~V. Veeravalli, ``Sequential (quickest) change
  detection: Classical results and new directions,'' \emph{IEEE Journal on
  Selected Areas in Information Theory}, vol.~2, no.~2, pp. 494--514, 2021.

\bibitem{lorden1971}
G.~Lorden, ``Procedures for reacting to a change in distribution,'' \emph{The
  Annals of Mathematical Statistics}, vol.~42, no.~6, pp. 1897--1908, Dec.
  1971.

\bibitem{pollakmixture}
M.~Pollak, ``Optimality and almost optimality of mixture stopping rules,''
  \emph{Annals of Statistics}, vol.~6, no.~4, pp. 910--916, Jul. 1978.

\bibitem{siegmund1995}
D.~Siegmund and E.~S. Venkatraman, ``Using the generalized likelihood ratio
  statistic for sequential detection of a change-point,'' \emph{The Annals of
  Statistics}, vol.~23, no.~1, pp. 255--271, Feb. 1995.

\bibitem{lai1998}
T.~L. Lai, ``Information bounds and quick detection of parameter changes in
  stochastic systems,'' \emph{IEEE Transactions on Information Theory},
  vol.~44, no.~7, pp. 2917--2929, November 1998.

\bibitem{TartakovskySISP98}
A.~G. Tartakovsky, ``Asymptotic optimality of certain multihypothesis
  sequential tests: Non‐i.i.d. case,'' \emph{Statistical Inference for
  Stochastic Processes}, vol.~1, no.~3, pp. 265--295, October 1998.

\bibitem{TartakovskyVeerTVP05}
A.~G. Tartakovsky and V.~V. Veeravalli, ``General asymptotic {Bayesian} theory
  of quickest change detection,'' \emph{Theory of Probability \& Its
  Applications}, vol.~49, no.~3, pp. 458--497, 2005.

\bibitem{TartakovskyIEEEIT2017}
A.~G. Tartakovsky, ``On asymptotic optimality in sequential changepoint
  detection: Non-iid case,'' \emph{IEEE Transactions on Information Theory},
  vol.~63, no.~6, pp. 3433--3450, March 2017.

\bibitem{BaronTartakovskySA06}
M.~Baron and A.~G. Tartakovsky, ``Asymptotic optimality of change-point
  detection schemes in general continuous-time models,'' \emph{Sequential
  Analysis}, vol.~25, no.~3, pp. 257--296, 2006.

\bibitem{TartakovskyIEEEIT2019}
A.~G. Tartakovsky, ``Asymptotic optimality of mixture rules for detecting
  changes in general stochastic models,'' \emph{IEEE Transactions on
  Information Theory}, vol.~65, no.~3, pp. 1413--1429, March 2019.

\bibitem{PerTar-JMVA2019}
S.~Pergamenchtchikov and A.~G. Tartakovsky, ``Asymptotically optimal pointwise
  and minimax change-point detection for general stochastic models with a
  composite post-change hypothesis,'' \emph{Journal of Multivariate Analysis},
  vol. 174, no.~11, pp. 1--20, October 2019.

\bibitem{page1954}
E.~S. Page, ``Continuous inspection schemes,'' \emph{Biometrika}, vol.~41, no.
  1/2, pp. 100--115, Jun. 1954.

\bibitem{moustakides1986}
G.~V. Moustakides, ``Optimal stopping times for detecting changes in
  distributions,'' \emph{Annals of Statistics}, vol.~14, no.~4, pp. 1379--1387,
  Dec. 1986.

\bibitem{PollakAS85}
M.~Pollak, ``Optimal detection of a change in distribution,'' \emph{The Annals
  of Statistics}, vol.~13, no.~1, pp. 206--227, 1985.

\bibitem{tartakovsky_polpolunch_2012}
A.~G. Tartakovsky, M.~Pollak, and A.~S. Polunchenko, ``Third-order asymptotic
  optimality of the generalized {Shiryaev-Roberts} changepoint detection
  procedures,'' \emph{Theory of Probability and its Applications}, vol.~56,
  no.~3, pp. 457--484, September 2012.

\bibitem{MoustPolTarCS09}
G.~V. Moustakides, A.~S. Polunchenko, and A.~G. Tartakovsky, ``Numerical
  comparison of {CUSUM} and {Shiryaev-Roberts} procedures for detecting changes
  in distributions,'' \emph{Communications in Statistics - Theory and Methods},
  vol.~38, no. 16-17, pp. 3225--3239, 2009.

\bibitem{PollakTartakovskyTPA09}
M.~Pollak and A.~G. Tartakovsky, ``Asymptotic exponentiality of the
  distribution of first exit times for a class of {M}arkov processes with
  applications to quickest change detection,'' \emph{Theory of Probability and
  its Applications}, vol.~53, no.~3, pp. 430--442, 2009.

\bibitem{Yakir-AS95}
B.~Yakir, ``A note on the run length to false alarm of a change-point detection
  policy,'' \emph{Annals of Statistics}, vol.~23, no.~1, pp. 272--281, 1995.

\bibitem{Mei-SQA08}
Y.~Mei, ``Is average run length to false alarm always an informative
  criterion?'' \emph{Sequential Analysis}, vol.~27, no.~4, pp. 354--376,
  October 2008.

\bibitem{Tartakovsky-SQA08a}
A.~G. Tartakovsky, ``Discussion on ``{I}s average run length to false alarm
  always an informative criterion?'' by {Yajun Mei},'' \emph{Sequential
  Analysis}, vol.~27, no.~4, pp. 396--405, October 2008.

\bibitem{PergTarSISP2016}
S.~Pergamenchtchikov and A.~G. Tartakovsky, ``Asymptotically optimal pointwise
  and minimax quickest change-point detection for dependent data,''
  \emph{Statistical Inference for Stochastic Processes}, vol.~21, pp. 217--259,
  January 2018.

\bibitem{nyt-covid-data}
\BIBentryALTinterwordspacing
N.~Y. Times. {Coronavirus in the U.S.: Latest Map and Case Count}. [Online].
  Available: \url{https://www.nytimes.com/interactive/2021/us/covid-cases.html}
\BIBentrySTDinterwordspacing

\bibitem{covid_beta}
L.~Scrucca, ``A {COVINDEX} based on a {GAM} beta regression model with an
  application to the {COVID-19} pandemic in {Italy},'' \emph{Statistical
  Methods \& Applications}, 2022.

\end{thebibliography}

\end{document}